\titlespacing*{\section}{0pt}{7pt}{3pt}
\titlespacing*{\subsection}{0pt}{7pt}{3pt}
\titlespacing*{\subsubsection}{0pt}{9pt}{5pt}
\DeclareRobustCommand{\rchi}{{\mathpalette\irchi\relax}}
\newcommand{\irchi}[2]{\raisebox{\depth}{$#1\chi$}} 
\newtheorem{prop}{Proposition}
\newtheorem{lem}{Lemma}
\newtheorem{problem}{Problem}
\newtheorem{defn}{Definition}
\newtheorem{rem}{Remark}
\newtheorem{cor}{Corollary}
\newtheorem{assum}{Assumption}
\newcommand{\panos}[1]{\textcolor{red}{#1}}
\newcommand{\vct}[1]{\boldsymbol{#1}}
\renewcommand{\Re}{\mathbb{R}}
\newcommand{\E}{\mathbb{E}}
\newcommand{\vectorize}{\operatorname{vec}}
\title[]{Data-Driven Robust Covariance Control for Uncertain Linear Systems}
\author{%
\Name{Joshua Pilipovsky} \Email{jpilipovsky3@gatech.edu}\\
\addr Daniel Guggenheim School of Aerospace Engineering, Georgia Institute of Technology
\AND
\Name{Panagiotis Tsiotras} \Email{tsiotras@gatech.edu}\\
\addr Daniel Guggenheim School of Aerospace Engineering, Georgia Institute of Technology%
}
\begin{document}

\maketitle

\begin{abstract}%
	The theory of covariance control and covariance steering (CS) deals with controlling the dispersion of trajectories of a dynamical system, under the implicit assumption that accurate prior knowledge of the system being controlled is available.
	In this work, we consider the problem of steering the distribution of a discrete-time, linear system subject to exogenous disturbances under an \textit{unknown} dynamics model.
	Leveraging concepts from behavioral systems theory, the trajectories of this unknown, noisy system may be (approximately) represented using system data collected through experimentation.
	Using this fact, we formulate a direct data-driven covariance control problem using input-state data.
	We then propose a maximum likelihood uncertainty quantification method to estimate and bound the noise realizations in the data collection process.
	Lastly, we utilize robust convex optimization techniques to solve the resulting norm-bounded uncertain convex program.
	We illustrate the proposed end-to-end data-driven CS algorithm on a double integrator example and showcase the efficacy and accuracy of the proposed method compared to that of model-based methods.
\end{abstract}

\begin{keywords}%
	Data-driven control, distributional control, uncertainty quantification, system identification, robust convex optimization.
\end{keywords}

\section{Introduction}

Controlling the uncertainty is paramount for the development of safe and reliable systems. Originating from the pioneering contributions of~\cite{HS2}, the theory of covariance control addresses the problem of asymptotically steering the \textit{distribution} of a linear dynamical system from an initial to a terminal distribution when the system dynamics are corrupted by additive disturbances.
When the time horizon is finite, the covariance control problem is often referred to as covariance steering (CS).
This domain has evolved substantially in recent years, expanding to encompass more pragmatic scenarios. These extensions include incorporating probabilistic constraints on the state and the input \citep{Josh_IRA_journal, Bakolas_constrained_CS}, the ability to steer more complex distributions \citep{nonGaussianCS}, and adaptations for receding horizon implementations \citep{saravanos2022distributed} among many others.
Covariance steering has also been successfully applied to diverse contexts such as urban air mobility \citep{exact_CS_2}, vehicle path planning \citep{kazu_PP}, high-performance, aggressive driving \citep{Knaup_CSMPC}, spacecraft rendezvous \citep{Josh_DRCS}, and interplanetary trajectory optimization \citep{JoshJack}.

A common underlying assumption in all the previous methods is the availability of a model of the system dynamics, typically derived from physical first principles.
Acquiring such a model often involves either direct data acquisition or a synthesis of empirical data with first principles, a process broadly categorized under system identification. 
These approaches, while effective, present some notable challenges. 
First principles modeling demands extensive domain-specific knowledge and effort to accurately represent the system. 
System identification offers a balance between accuracy and complexity but introduces complications in subsequent control design phases, such as determining the optimal model for control law synthesis. 
This \textit{indirect} data-driven control approach leads to a complex bi-level optimization problem, involving both model identification and control design, which is generally inseparable \citep{Dorfler_DD_special_issue}. 
Consequently, research efforts have been devoted to the investigation of alternative strategies such as dual control \citep{dual_control} and combined identification for control 
\citep{sysID_for_control}.

Our work deviates from these traditional methodologies by delving into \textit{direct} data-driven control design. 
The proposed approach bypasses the necessity for a parametric system model, instead deriving feedback control laws directly from empirical data. 
By applying principles from behavioral systems theory, namely \textit{Willems' Fundamental Lemma}~\citep{WFL} we can effectively characterize the trajectories of a Linear Time-Invariant (LTI) system using the Hankel matrix of the input/state data \citep{behavioral_system_theory, WFL}.
In cases when measurements are accurate, and in the absence of noise,
this procedure exactly characterizes the system behavior, laying the groundwork for control design based on this data-driven paradigm.
This method has demonstrated its effectiveness, e.g., in solving the Linear Quadratic Regulator (LQR) problem \citep{DataDriven_dePersis_1, DataDriven_dePersis_2} without knowledge of the system matrices.
It has been further adapted to situations with noisy data via regularization strategies~\citep{DataDriven_regularization, DataDriven_CE}. 
These developments successfully bridge the gap between \textit{certainty-equivalence} (CE) and \textit{robust} control paradigms. 
Additionally, recent advancements \citep{DataDriven_Petersen, JP_DDCS_noiseless} have extended this data-driven method to the design of feedback controllers that are robust to various forms of \textit{bounded} disturbances and to uncertainties in the initial state distribution.

This paper introduces a novel approach to data-driven covariance control design in systems subjected to additive, potentially \textit{unbounded}, Gaussian disturbances.
Given that the collected data is influenced by noise, the resulting optimization program incorporates the unknown noise \textit{realization} from the input/output dataset. 
We tackle this problem by first estimating the noise realization sequence and the underlying noise covariance matrix using maximum likelihood methods, while also establishing norm bounds for the estimation error.
We then employ the concept of the robust counterpart of an uncertain convex program to enhance the tractability of the resulting convex optimization program.
We illustrate the proposed framework on a double integrator system and compare the performance and robustness of the data-driven controller with that of a model-based controller.

\section{Problem Statement}~\label{sec:PS}
\vspace{-0.6cm}
\subsection{Notation}~\label{subsec:notation}

Real-valued vectors are denoted by lowercase letters, $u\in\Re^{m}$, matrices are denoted by uppercase letters, $V\in\Re^{n\times m}$, and random vectors are denoted by boldface letters, $\vct w\in\Re^{p}$.
$\rchi_{p,q}^{2}$ denotes the inverse cumulative distribution function (CDF) of the chi-square distribution with $p$ degrees of freedom and quantile $q$.
The Kronocker product is denoted as $\otimes$ and the vectorization of a matrix $A$ is denoted as $\vectorize(A) = [a_1^\intercal, \ldots, a_{M}^\intercal]^\intercal$, where $a_i$ is the $i$th column of $A$.
Given two matrices $A$ and $B$ having the same number of columns, the matrix $[A;B]$ denotes the stacking of the two matrices columnwise.
The set $\mathbb{N}_{[a,b]}$ with $a < b$, denotes the set of natural numbers between $a, b \in\mathbb{N}$.
We denote the two-norm by $\|\cdot\|$ and the matrix Frobenius norm by $\|\cdot\|_{\mathrm{F}}$.
Lastly, we denote a discrete-time signal $z_0, z_1, \ldots, z_{T-1}$ by $\{z_k\}_{k=0}^{T-1}$.
\subsection{Data-Driven Covariance Steering}~\label{subsec:DDCS_problem}

We consider the following discrete-time, stochastic, time-invariant system
\begin{equation}~\label{eq:1}
	\vct x_{k+1} = A \vct x_k + B \vct u_k + D \vct w_k, \quad k\in\mathbb{N}_{[0, N-1]},
\end{equation}
where $\vct x\in\mathbb{R}^{n}, \vct u\in\mathbb{R}^{m}$, and $\vct w\sim\mathcal{N}(0, I_{d})$ for all $k = 0,1,\ldots, N - 1$, where $N$ represents the time horizon.
We assume the noise is i.i.d. at each time step and is uncorrelated with the initial state, that is, $\E[\vct w_k \vct w_j^\intercal] = \E[\vct x_0 \vct w_k^\intercal] = 0$, for all $j\neq k$.
The system matrices $A, B, D$ are assumed to be constant, but unknown.
The initial uncertainty in the system resides in the initial state $\vct x_0$, which is a random $n$-dimensional vector drawn from the normal distribution
\begin{equation}~\label{eq:2}
	\vct x_0 \sim \mathcal{N}(\mu_i, \Sigma_i),
\end{equation}
where $\mu_i\in\mathbb{R}^n$ is the initial state mean and $\Sigma_i\in\mathbb{R}^{n\times n} \succ 0$ is the initial state covariance. 
The objective is to steer the trajectories of (\ref{eq:1}) from the initial distribution (\ref{eq:2}) to the terminal distribution
\begin{equation}~\label{eq:3}
	\vct x_N = \vct x_f \sim \mathcal{N}(\mu_f, \Sigma_f),
\end{equation}
where $\mu_f$ and $\Sigma_f\in\Re^{n\times n} \succ 0$ are the desired state mean and covariance at time $N$, respectively. 
Without loss of generality, we may assume that $\mu_f = 0$.
The cost function to be minimized is 
\begin{equation}~\label{eq:4}
	J(u_0,\ldots,u_{N-1}) \coloneqq \mathbb{E}\bigg[\sum_{k=0}^{N-1} \vct x_k^\intercal Q_k \vct x_k + \vct u_k^\intercal R_k \vct u_k\bigg],
\end{equation}
where $Q_k\succeq 0$ and $R_k \succ 0$ for all $k = 0,\ldots,N-1$. 
Let $\mathbb{D} := \{x_k^{(d)}, u_k^{(d)}, x_{T}^{(d)}\}_{k=0}^{T-1}$ be a dataset collected from the system with an experiment over the time horizon $T$.
In general, $N\neq T$.

\begin{problem}[Data-Driven CS Controller Design]~\label{problem:1}
	Given the unknown linear system (\ref{eq:1}), find the optimal control sequence $\{\vct u_k\}_{k=0}^{N-1}$ that minimizes the objective function (\ref{eq:4}), subject to the initial (\ref{eq:2}) and terminal  (\ref{eq:3}) state distributions using the dataset $\mathbb{D}$.
\end{problem}
\subsection{Problem Reformulation}~\label{subsec:PS_reformulation}

Borrowing from the work of \cite{exact_CS}, we adopt the control policy
\begin{equation}~\label{eq:controlLaw}
	\vct u_k = K_k(\vct x_k - \mu_k) + v_k,
\end{equation}
where $K_k\in\Re^{m\times n}$ is the feedback gain that controls the covariance of the state and $v_k\in\Re^{m}$ is the feed-forward term that controls the mean of the state.
Under the control law \eqref{eq:controlLaw}, it is possible to re-write Problem~\ref{problem:1} as a convex program, which can be solved to optimality using off-the-shelf solvers~\citep{YALMIP}.
With no  constraints present, the state distribution remains Gaussian at all time steps, completely characterized by its first two moments.
Consequently, we may decompose the system dynamics \eqref{eq:1} into the mean dynamics and covariance dynamics.
Substituting the control law \eqref{eq:controlLaw} into the dynamics \eqref{eq:1} yields the decoupled mean and covariance dynamics
\begin{subequations}~\label{eq:Dynamics}
	\begin{align}
		\mu_{k+1} &= A\mu_k + B v_k, \label{eq:meanDynamics} \\
		\Sigma_{k+1} &= (A + B K_k)\Sigma_k(A + B K_k)^\intercal + DD^\intercal. \label{eq:covDynamics}
	\end{align}
\end{subequations}
In the sequel, and similar to the recent work by \cite{exact_CS_2}, we treat the moments of the intermediate states $\{\Sigma_k, \mu_k\}_{k=0}^{N}$ over the steering horizon as decision variables in the resulting optimization problem.

Similar to the dynamics in \eqref{eq:Dynamics}, the cost function can be decoupled as $J = J_{\mu}(\mu_k, v_k) + J_{\Sigma}(\Sigma_k, K_k),$ where
\begin{subequations}
	\begin{align}
		J_{\mu} &:= \sum_{k=0}^{N-1}\left(\mu_k^\intercal Q_k \mu_k + v_k^\intercal R_k v_k\right), \label{eq:meanCost} \\
		J_{\Sigma} &:= \sum_{k=0}^{N-1}\Big(\textrm{tr}(Q_k\Sigma_k) + \textrm{tr}(R_kK_k\Sigma_k K_k^\intercal) \Big). \label{eq:covCost}
	\end{align}
\end{subequations}
Lastly, the boundary conditions take the form
\begin{subequations}
	\begin{align}
		&\mu_0 = \mu_i, \quad \mu_N = \mu_f, \label{eq:meanBCs} \\
		&\Sigma_0 = \Sigma_i, \quad \Sigma_N = \Sigma_f, \label{eq:covBCs}
	\end{align}
\end{subequations}
where $\Sigma_i, \Sigma_f \succ 0$.
Problem~\ref{problem:1} is now recast as the following two sub-problems.

\begin{problem}[Data-Driven Mean Steering (DD-MS)]~\label{problem:2}
	Given the mean dynamics \eqref{eq:meanDynamics}, find the optimal mean trajectory $\{\mu_k\}_{k=0}^{N}$ and feed-forward control $\{v_k\}_{k=0}^{N-1}$ that minimize the mean cost \eqref{eq:meanCost} subject to the boundary conditions \eqref{eq:meanBCs} using the dataset $\mathbb{D}$.
\end{problem} 

\begin{problem}[Data-Driven Covariance Steering (DD-CS)]~\label{problem:3}
	Given the covariance dynamics \eqref{eq:covDynamics}, find the optimal covariance trajectory $\{\Sigma_k\}_{k=0}^{N}$ and feedback gains $\{K_k\}_{k=0}^{N-1}$ that minimize the covariance cost \eqref{eq:covCost} subject to the boundary conditions \eqref{eq:covBCs} using the dataset $\mathbb{D}$.
\end{problem} 

Note that both the mean and covariance steering problems rely on the system matrices $A$ and $B$ through the system dynamics \eqref{eq:Dynamics}.
Thus, the two problems stated above, are not yet amenable to a data-driven solution.
%
%
In the following section, we review the main concepts from behavioral systems theory that will allow us to parametrize the decision variables in Problems~\ref{problem:2} and \ref{problem:3} in terms of the dataset $\mathbb{D}$.

\section{Data-Driven Parameterization}~\label{sec:DataDrivenDesign}

We use the concept of persistence of excitation, along with Willems' Fundamental Lemma~\citep{WFL} to parameterize the decision variables of the control policy.
First, recall the following definitions from~\cite{DataDriven_dePersis_1}.

\begin{defn}~\label{def:1}
	Given a signal $\{z_k\}_{k=0}^{T-1}$ where $z\in\Re^{\sigma}$, its Hankel matrix is given by
	\begin{equation}
		Z_{i,\ell, j} := 
		\begin{bmatrix}
			z_i & z_{i+1} & \ldots & z_{i+j-1} \\
			z_{i+1} & z_{i+2} & \ldots & z_{i + j} \\
			\vdots & \vdots & \ddots & \vdots \\
			z_{i+\ell-1} & z_{i+\ell} & \ldots & z_{i+\ell+j-2}
		\end{bmatrix} \in \Re^{\sigma\ell \times j},
	\end{equation}
	where $i\in\mathbb{N}_0$ 
 and $\ell,j\in\mathbb{N}$.
	For shorthand of notation, if $\ell=1$, we will denote the Hankel matrix by
	\begin{equation}
		Z_{i,1,j} \equiv Z_{i,j} := [z_i \ z_{i+1} \ \ldots \ z_{i+j-1}].
	\end{equation}
\end{defn}

\begin{defn}~\label{def:2}
	The signal $\{z_k\}_{k=0}^{T-1}: [0,T-1]\cap\mathbb{Z}\rightarrow\Re^{\sigma}$ is \textit{persistently exciting} of order $\ell$ if the matrix $Z_{0,\ell,T-\ell + 1}$ has rank $\sigma\ell$.
\end{defn}

Given the dataset $\mathbb{D}$, let the corresponding Hankel matrices for the input sequence, state sequence, and shifted state sequence (with $\ell = 1$) be 
$U_{0,T} := [u_0^{(d)} \ u_1^{(d)} \ \ldots \ u_{T-1}^{(d)}]$, \ $X_{0,T} := [x_0^{(d)} \ x_1^{(d)} \ \ldots \ x_{T-1}^{(d)}]$, and $X_{1,T} := [x_1^{(d)} \ x_2^{(d)} \ \ldots \ x_{T}^{(d)}]$.
The next result characterizes the rank of the stacked Hankel matrices of the input and output data, and is central to our approach to formulate a tractable data-driven covariance steering problem.

\begin{assum}~\label{assum_rank}
	We assume that the data is persistently exciting, i.e., the Hankel matrix of input/state data is full row rank
	\begin{equation}~\label{eq:rank_condition}
		\mathrm{rank}
		\begin{bmatrix}
			U_{0,T} \\
			X_{0,T}
		\end{bmatrix}
		= n + m.
	\end{equation}
\end{assum}

\begin{rem}~\label{rem:4}
    In this work, the full rank condition in \eqref{eq:rank_condition} is an assumption, rather than a derived condition based on notions of persistency of excitation.
    Indeed, when the system is purely deterministic, \eqref{eq:rank_condition} holds if and only if the input $\{u_k\}_{k=0}^{T-1}$ is persistently exciting of order $n + 1$, a result now called the fundamental Lemma \cite{WFL}.
    However, in the present case, it can be shown that if in addition the disturbance realization $W_{0,T} \triangleq [\vct w_0^{(d)}, \ldots, \vct w_{T-1}^{(d)}]$ is persistently exciting of order $n + 1$, then Assumption~\ref{assum_rank} holds \cite{DataDriven_Petersen}.
\end{rem} 

Assumption~\ref{assum_rank} implies that \textit{any} arbitrary input-state sequence can be expressed as a linear combination of the collected input-state data. 
Furthermore, as shown in the next section, this idea has been used by \cite{DataDriven_dePersis_1} to parameterize arbitrary feedback interconnections as well.
In the following section, based on the work of \cite{DataDriven_dePersis_2}, we parameterize the feedback gains in terms of the input-state data and reformulate Problem~\ref{problem:3} as a semi-definite program (SDP).

\subsection{Direct Data-Driven Covariance Steering}~\label{sec:directCS}

Assuming the signal $\{u_k\}_{k=0}^{T-1}$ is persistently exciting of order $n + 1$, we can express the feedback gains as follows
\begin{equation}~\label{eq:WFL_gains}
	\begin{bmatrix}
		K_k \\
		I_n
	\end{bmatrix}
	= 
	\begin{bmatrix}
		U_{0,T} \\
		X_{0,T}
	\end{bmatrix}
	G_k, \quad k = 0,\ldots, N - 1,
\end{equation}
where $G_k \in \Re^{T\times n}$ are newly defined decision variables that provide the link between the feedback gains and the input-state data.
\begin{theorem}~\label{thm:relaxed_DDCS}
	Using the data-driven parameterization of the feedback gains \eqref{eq:WFL_gains}, Problem~\ref{problem:3} may be relaxed as the convex program
    \begin{subequations}~\label{eq:convexProblem}
        \begin{align}
            &\hspace*{-4mm} \min_{\Sigma_k, S_k Y_k} \bar{J}_{\Sigma} = \sum_{k=0}^{N-1}\left(\mathrm{tr}(Q_k\Sigma_k) + \mathrm{tr}(R_kU_{0,T}Y_kU_{0,T}^\intercal)\right), \label{eq:convexProblem_cost}
        \end{align}
        such that, for all $k = 0,\ldots, N - 1$,
        \begin{align}
            &C_k^{(1)} := 
            \begin{bmatrix}
                \Sigma_k & S_k^\intercal \\
                S_k & Y_k
            \end{bmatrix} \succeq 0, \label{eq:convexProblem_ineqConstraint} \\
            &C_k^{(2)} := 
            \begin{bmatrix}
                \Sigma_{k+1} - \Sigma_{\vct\xi} & (X_{1,T} - \Xi_{0,T})S_k \\
                S_k^\intercal (X_{1,T} - \Xi_{0,T})^\intercal & \Sigma_{k}
            \end{bmatrix} \succeq 0, \label{eq:convexProblem_ineqConstraint2} \\
            &G_k^{(1)} := \Sigma_k - X_{0,T} S_k = 0, , \quad G_k^{(2)} := \Sigma_{N} - \Sigma_{f} = 0, \label{eq:convexProblem_eqConstraints}
        \end{align}
    \end{subequations}
    where $\Xi_{0,T} := [\vct \xi_0, \ldots, \vct \xi_{T-1}]\in\Re^{n\times T}$ is the Hankel matrix of the (unknown) disturbances, and $\vct\xi_k \sim \mathcal{N}(0,\Sigma_{\vct\xi})$, 
    where $\Sigma_{\vct\xi} := DD^\intercal$ for all $k=0,\ldots,T-1$.
\end{theorem}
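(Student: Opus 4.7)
The plan is to start from the closed-loop covariance recursion $\Sigma_{k+1} = (A+BK_k)\Sigma_k(A+BK_k)^{\intercal} + DD^{\intercal}$ and systematically replace every occurrence of the unknown matrices $A,B$ by expressions involving only the collected data and the auxiliary variable $G_k$. Because the data satisfies the Hankel identity $X_{1,T} = AX_{0,T} + BU_{0,T} + \Xi_{0,T}$, the parameterization \eqref{eq:WFL_gains} gives
\[
A + BK_k = \begin{bmatrix} B & A \end{bmatrix}\begin{bmatrix} K_k \\ I_n \end{bmatrix} = \begin{bmatrix} B & A \end{bmatrix}\begin{bmatrix} U_{0,T} \\ X_{0,T} \end{bmatrix} G_k = (X_{1,T} - \Xi_{0,T})\,G_k,
\]
so that the closed-loop transition map is expressed purely through the data, the decision variable $G_k$, and the (unknown, but for this theorem treated symbolically) disturbance block $\Xi_{0,T}$.

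Next, I would eliminate the bilinear products by introducing the change of variables $S_k := G_k\Sigma_k$. Right-multiplying the identity $X_{0,T}G_k = I_n$ from \eqref{eq:WFL_gains} by $\Sigma_k$ produces the linear equality $X_{0,T}S_k = \Sigma_k$, which is precisely $G_k^{(1)}=0$. In these variables, the covariance recursion becomes
\[
\Sigma_{k+1} - \Sigma_{\vct\xi} = (X_{1,T}-\Xi_{0,T})\,S_k\Sigma_k^{-1}S_k^{\intercal}\,(X_{1,T}-\Xi_{0,T})^{\intercal}.
\]
I would then relax this equality to $\succeq$ and, since $\Sigma_k \succ 0$, apply the Schur complement to obtain exactly the LMI $C_k^{(2)}\succeq 0$. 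The terminal boundary condition $\Sigma_N = \Sigma_f$ from \eqref{eq:covBCs} is kept as the linear equality $G_k^{(2)}=0$.

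For the cost, I would rewrite $\mathrm{tr}(R_k K_k\Sigma_k K_k^{\intercal}) = \mathrm{tr}\bigl(R_k U_{0,T} G_k\Sigma_k G_k^{\intercal} U_{0,T}^{\intercal}\bigr)$ and introduce a slack $Y_k$ satisfying $Y_k \succeq G_k\Sigma_k G_k^{\intercal} = S_k\Sigma_k^{-1} S_k^{\intercal}$. A second Schur complement encodes this inequality as the LMI $C_k^{(1)}\succeq 0$. Because $R_k\succ 0$, this yields $\mathrm{tr}(R_k U_{0,T} Y_k U_{0,T}^{\intercal}) \geq \mathrm{tr}(R_k K_k\Sigma_k K_k^{\intercal})$, so the convex objective \eqref{eq:convexProblem_cost} upper-bounds the original covariance cost \eqref{eq:covCost}. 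Combining the three ingredients, any feasible pair $(K_k,\Sigma_k)$ of Problem~\ref{problem:3} lifts to a feasible tuple $(\Sigma_k,S_k,Y_k)$ of \eqref{eq:convexProblem} with no larger cost, establishing the desired convex relaxation.

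The main obstacle I anticipate is the bookkeeping around the unknown disturbance block $\Xi_{0,T}$, which enters affinely in $C_k^{(2)}$ but is not accessible from the dataset $\mathbb{D}$; the statement of Theorem~\ref{thm:relaxed_DDCS} leaves $\Xi_{0,T}$ symbolic, but any actual deployment must later replace it either by an estimate or by a robust uncertainty set, which is exactly what the subsequent sections of the paper address. A secondary subtlety is verifying that the two relaxations, i.e.\ replacing covariance equality by $\succeq$ and bounding $G_k\Sigma_k G_k^{\intercal}$ from above by $Y_k$, are monotone in the direction in which the convex cost pulls them; this holds because $Q_k\succeq 0$, $R_k\succ 0$, and $U_{0,T}^{\intercal}R_k U_{0,T}\succeq 0$, so that any slack in these inequalities can only increase the objective.
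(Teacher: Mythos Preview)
Your proposal is correct and follows essentially the same approach as the paper's own proof: both use the data identity $[B\ A][U_{0,T};X_{0,T}]=X_{1,T}-\Xi_{0,T}$ together with \eqref{eq:WFL_gains} to rewrite the closed-loop map, introduce $S_k:=G_k\Sigma_k$, relax the covariance recursion to an inequality, and bound $S_k\Sigma_k^{-1}S_k^{\intercal}$ by a slack $Y_k$, then package both relaxations as LMIs via Schur complements. If anything, you are slightly more explicit than the paper in deriving $G_k^{(1)}$ from the bottom block of \eqref{eq:WFL_gains} and in arguing the monotonicity direction of the relaxations.
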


\begin{proof}
Please see Appendix~A \citep{JP_RDDCS_arxiv}.
\end{proof}

\subsection{Indirect Data-Driven Mean Steering}

Given the mean dynamics \eqref{eq:meanDynamics} in terms of the open-loop control $v_k$, Assumption~\ref{assum_rank} can also be used to provide a system identification type-of-result using the following theorem.
\begin{theorem}~\label{theorem:1}
	Suppose the signal $u_k^{(d)}$ 
 is persistently exciting of order $n + 1$.
	Then, system \eqref{eq:meanDynamics} has the following equivalent representation
	\begin{equation} ~\label{eq:mut+1}
		\mu_{k+1} = (X_{1,T} - \Xi_{0,T})
		\begin{bmatrix}
			U_{0,T} \\
			X_{0,T}
		\end{bmatrix}^{\dagger}
		\begin{bmatrix}
			v_k \\
			\mu_k
		\end{bmatrix}.
	\end{equation}
\end{theorem}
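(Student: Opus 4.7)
The plan is to treat the theorem as a system-identification style identity: show that the data matrix $(X_{1,T}-\Xi_{0,T})[\,U_{0,T};X_{0,T}\,]^{\dagger}$ is exactly $[B,\,A]$ on the range of the persistently exciting input-state regressor, and then plug this identity into the mean dynamics \eqref{eq:meanDynamics}.

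First, I would write the data-generating equation in matrix (batch) form. Since each column of the dataset satisfies $x_{k+1}^{(d)} = A x_k^{(d)} + B u_k^{(d)} + D w_k^{(d)}$ for $k=0,\ldots,T-1$, stacking these equations columnwise and using $\Xi_{0,T} = D\,W_{0,T}$ yields
\begin{equation*}
    X_{1,T} \;=\; A\,X_{0,T} + B\,U_{0,T} + \Xi_{0,T} \;=\; [\,B \ \ A\,]\begin{bmatrix} U_{0,T} \\ X_{0,T} \end{bmatrix} + \Xi_{0,T}.
\end{equation*}
Rearranging gives $[\,B\ \ A\,]\,[\,U_{0,T};X_{0,T}\,] = X_{1,T} - \Xi_{0,T}$.

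Next, I would invoke Assumption~\ref{assum_rank}: the stacked Hankel matrix $[\,U_{0,T};X_{0,T}\,]\in\Re^{(m+n)\times T}$ has full row rank $m+n$. Consequently, its Moore--Penrose (right) pseudoinverse satisfies $[\,U_{0,T};X_{0,T}\,]\,[\,U_{0,T};X_{0,T}\,]^{\dagger} = I_{m+n}$. Right-multiplying the previous identity by this pseudoinverse isolates the system parameters,
\begin{equation*}
    [\,B \ \ A\,] \;=\; (X_{1,T} - \Xi_{0,T})\begin{bmatrix} U_{0,T} \\ X_{0,T} \end{bmatrix}^{\dagger}.
\end{equation*}
Finally, substituting into the mean dynamics $\mu_{k+1}=A\mu_k + Bv_k = [B\ A]\,[v_k;\mu_k]$ produces \eqref{eq:mut+1} directly.

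There is no real obstacle in the algebra; the only subtle point worth flagging is the validity of the right-pseudoinverse step, which hinges entirely on Assumption~\ref{assum_rank}. As noted in Remark~\ref{rem:4}, in the noisy setting this is a standing assumption rather than a consequence of persistency of excitation of $\{u_k^{(d)}\}$ alone, so I would briefly reference that remark to justify that the pseudoinverse acts as a true right inverse. Note also that $\Xi_{0,T}$ is not available from the dataset, which is precisely what motivates the uncertainty quantification and robust-counterpart machinery developed in the remainder of the paper.
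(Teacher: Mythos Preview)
Your argument is correct and is exactly the standard derivation from \cite{DataDriven_dePersis_1}, which is all the paper invokes here (it defers the proof entirely to that reference). The only nuance you already flagged---that the full-row-rank property used for the right pseudoinverse is Assumption~\ref{assum_rank} rather than a direct consequence of persistency of excitation of $\{u_k^{(d)}\}$ alone in the noisy case---is precisely the point of Remark~\ref{rem:4}, so your treatment is aligned with the paper.
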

\begin{proof}
	See \cite{DataDriven_dePersis_1} for details.
\end{proof}
%

Using Theorem~\ref{theorem:1}, we can express Problem~2 as the following convex problem
\begin{subequations}~\label{eq:meanProblem}
	\begin{equation}
		\min_{\mu_k, v_k} J_{\mu} = \sum_{k=0}^{N-1} (\mu_k^\intercal Q_k \mu_k + v_k^\intercal R_k v_k), \label{eq:meanProblem_cost}
	\end{equation}
 such that, for all $k = 0,\ldots, N - 1,$
	\begin{align}
		F_{\mu}(\Xi_{0,T}) \mu_k + F_{v}(\Xi_{0,T}) v_k - \mu_{k+1} = 0, \quad
		\mu_{N} - \mu_{f} = 0, \label{eq:meanProblem_eqConstraint2}
	\end{align}
\end{subequations}
where $F_{\mu}\in\Re^{n\times n}$ and $F_{v}\in\Re^{n\times m}$ result from the partition of
\begin{equation}
	F := (X_{1,T} - \Xi_{0,T})
	\begin{bmatrix}
		U_{0,T} \\
		X_{0,T}
	\end{bmatrix}^{\dagger} = 
	\begin{bmatrix}
		F_{v}(\Xi_{0,T}) & F_{\mu}(\Xi_{0,T})
	\end{bmatrix}.
\end{equation}
Given knowledge of $\Xi_{0,T}$, the solution of \eqref{eq:meanProblem} yields an indirect design for the feed-forward control that solves the mean steering problem.
In \citep{JP_DDCS_noiseless} it was shown that, in the absence of noise, the previous optimization problem can be solved as a convex program.

\section{Uncertainty Quantification}

\subsection{Maximum Likelihood Estimation of Noise Realization}

The optimization problem \eqref{eq:convexProblem} as well as \eqref{eq:meanProblem} depend on the unknown noise realization.
In this section, we propose a maximum likelihood estimation (MLE) scheme to estimate the noise realization from the collected data.
First, we encode the stochastic linear system dynamics by enforcing a \textit{consistency} condition on the realization data.
To this end, we first notice that there exists a noise realization such that the dynamics \eqref{eq:1} satisfy
\begin{equation}~\label{eq:dynamics_realization}
	X_{1,T} = AX_{0,T} + B U_{0,T} + \Xi_{0,T}.
\end{equation}
For notational convenience, define the augmented Hankel matrix
$S := [U_{0,T}; X_{0,T}]\in\Re^{(m+n)\times T}$,
from which we may re-write \eqref{eq:dynamics_realization} as $X_{1,T} = [B \ A]S + \Xi_{0,T}$.
Additionally, noting that the pseudoinverse satisfies the property $SS^\dagger S = S$, \eqref{eq:dynamics_realization} is equivalently written as
$
	X_{1,T} - \Xi_{0,T} = [B \ A] SS^\dagger S,
$
which, using the 
relation $X_{1,T} - \Xi_{0,T} = [B \ A]S$, yields
\begin{equation}~\label{eq:consistency_equation}
	(X_{1,T} - \Xi_{0,T})(I_{T} - S^\dagger S) = 0.
\end{equation}
Equation \eqref{eq:consistency_equation} is a \textit{model-free} condition that must be satisfied for all noisy linear system data realizations, and hence is a consistency relation for any feasible set of data.
Given the constraint \eqref{eq:consistency_equation}, the MLE  problem then becomes
\begin{equation}~\label{eq:ML_problem}
    \max_{\Xi_{0,T}, \Sigma_{\vct\xi}} \ \mathcal{J}_{\mathrm{MLE}}(\Xi_{0,T}, \Sigma_{\vct\xi} \ | \ \mathbb{D}) \triangleq \sum_{k=0}^{T - 1} \log \rho_{\vct\xi}(\xi_k), \quad \mathrm{s.t.} \quad (X_{1,T} - \Xi_{0,T})(I_{T} - S^\dagger S) = 0,
\end{equation}
where $\rho_{\vct\xi}(x)$ is the probability density function (PDF) of the noise random vector $\vct\xi$.

\begin{theorem}~\label{thm:ML_to_DC}
	Assuming $\Sigma_{\vct\xi} \succ 0$, the MLE optimization problem~\eqref{eq:ML_problem} may be solved as the following difference-of-convex (DC) program
	\begin{subequations}~\label{eq:ML_problem_DC}
		\begin{align}
			&\min_{\Xi_{0,T}, \Sigma_{\vct\xi}, U} \ \left[\frac{1}{2}\mathrm{tr}(U) - \left(-\frac{T}{2}\log\det\Sigma_{\vct\xi}\right)\right] \\
			&\quad \mathrm{s.t.} \quad\quad (X_{1,T} - \Xi_{0,T})(I_{T} - S^\dagger S) = 0, 
   \qquad 
                \begin{bmatrix}
				\Sigma_{\vct\xi} & \Xi_{0,T} \\
				\Xi_{0,T}^\intercal & U
			\end{bmatrix} \succeq 0.
		\end{align}
	\end{subequations}
\end{theorem}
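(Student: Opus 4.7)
The plan is to substitute the Gaussian density into the log-likelihood, reduce the sum to a closed-form expression, and then use an epigraph lift with a Schur-complement LMI to cast the objective in the stated DC form.

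First, since $\vct\xi_k \sim \mathcal{N}(0,\Sigma_{\vct\xi})$, each log-density evaluates to $\log\rho_{\vct\xi}(\xi_k) = -\tfrac{n}{2}\log(2\pi) - \tfrac{1}{2}\log\det\Sigma_{\vct\xi} - \tfrac{1}{2}\xi_k^\intercal \Sigma_{\vct\xi}^{-1}\xi_k$. Summing across $k=0,\ldots,T-1$ and using the identity $\sum_{k=0}^{T-1}\xi_k^\intercal \Sigma_{\vct\xi}^{-1}\xi_k = \mathrm{tr}\bigl(\Sigma_{\vct\xi}^{-1}\Xi_{0,T}\Xi_{0,T}^\intercal\bigr)$ (an immediate consequence of the cyclic property of the trace applied to $\Xi_{0,T}^\intercal \Sigma_{\vct\xi}^{-1}\Xi_{0,T}$), the maximization in \eqref{eq:ML_problem} is equivalent, up to an additive constant, to
\begin{equation*}
    \min_{\Xi_{0,T},\,\Sigma_{\vct\xi}\succ 0}\;\; \frac{T}{2}\log\det\Sigma_{\vct\xi} + \frac{1}{2}\mathrm{tr}\bigl(\Sigma_{\vct\xi}^{-1}\Xi_{0,T}\Xi_{0,T}^\intercal\bigr),
\end{equation*}
subject to the same consistency constraint $(X_{1,T}-\Xi_{0,T})(I_T - S^\dagger S)=0$, which is already linear (hence convex) in the decision variables.

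Next I would introduce an auxiliary matrix variable $U \in \Re^{T\times T}$ and add the constraint $U \succeq \Xi_{0,T}^\intercal \Sigma_{\vct\xi}^{-1}\Xi_{0,T}$. By the cyclic trace property this implies $\mathrm{tr}(U) \ge \mathrm{tr}\bigl(\Sigma_{\vct\xi}^{-1}\Xi_{0,T}\Xi_{0,T}^\intercal\bigr)$, so replacing the quadratic-in-inverse term by $\tfrac{1}{2}\mathrm{tr}(U)$ upper-bounds the original objective; since $\mathrm{tr}(U)$ enters only the cost and can always be decreased to tightness, the lift is exact at the optimum. Applying the Schur complement, with $\Sigma_{\vct\xi}\succ 0$, this matrix inequality is equivalent to the block LMI in \eqref{eq:ML_problem_DC}, producing a problem with a convex feasible set (a linear equation plus an LMI).

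Finally, I would verify the DC structure of the objective. The term $\tfrac{1}{2}\mathrm{tr}(U)$ is linear, hence convex; the function $-\tfrac{T}{2}\log\det\Sigma_{\vct\xi}$ is convex on the positive definite cone (a standard fact about $-\log\det$). Thus the objective $\tfrac{1}{2}\mathrm{tr}(U) - \bigl(-\tfrac{T}{2}\log\det\Sigma_{\vct\xi}\bigr)$ is manifestly a difference of two convex functions, which, combined with the convex constraints, yields exactly the DC program stated in the theorem. The main subtlety will be justifying that the epigraph lift does not enlarge the set of optimizers, i.e., that the Schur LMI is active at optimum; this follows since $\mathrm{tr}(U)$ appears only in the cost with positive coefficient, so any feasible solution with slack in the LMI can be improved by shrinking $U$ until $U = \Xi_{0,T}^\intercal \Sigma_{\vct\xi}^{-1}\Xi_{0,T}$.
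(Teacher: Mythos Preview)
Your proof is correct and follows essentially the same path as the paper: expand the Gaussian log-likelihood to a trace form, apply the epigraph lift $U \succeq \Xi_{0,T}^\intercal\Sigma_{\vct\xi}^{-1}\Xi_{0,T}$, rewrite via Schur complement, and identify the DC structure of the objective over a convex feasible set. The only minor difference is in the tightness argument: the paper establishes losslessness of the lift via the KKT conditions (showing the multiplier on the matrix inequality is $M=\tfrac{1}{2}I_T\succ 0$, which by complementary slackness forces $U=\Xi_{0,T}^\intercal\Sigma_{\vct\xi}^{-1}\Xi_{0,T}$), whereas you use the equivalent and more elementary monotonicity argument that any slack in $U$ can be removed to strictly decrease $\mathrm{tr}(U)$.
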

%
The previous DC program may be solved using a successive convexification procedure known as the convex-concave procedure (CCP), which is guaranteed to converge to a feasible point 
\citep{CCP_DC_programming, CCP_Boyd}.
%
%
%
Oftentimes, the disturbance matrix $D$ is known beforehand, as is the case when one knows how the disturbances affect the system state variables.
%
%
In such cases, $\Sigma_{\vct\xi} = DD^\intercal$ is no longer a decision variable, and the MLE problem \eqref{eq:ML_problem} may be solved analytically as given by the following Corollary~\ref{cor:ML_solution}.
\begin{cor}~\label{cor:ML_solution}
	Suppose $\Sigma_{\vct\xi}$ is known. 
	Then the MLE problem \eqref{eq:ML_problem} has the exact solution
	\begin{equation}~\label{eq:ML_estimation}
		\Xi_{0,T}^{\star} = X_{1,T}(I_{T} - S^\dagger S).
	\end{equation}
\end{cor}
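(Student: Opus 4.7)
The plan is to exploit the fact that, with $\Sigma_{\vct\xi}$ fixed and positive definite, the Gaussian log-likelihood reduces to a strictly convex quadratic in $\Xi_{0,T}$, and the linear consistency constraint has a geometry that lets us read off the minimizer in closed form.

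First I would rewrite the objective. Since $\vct\xi_k \sim \mathcal{N}(0,\Sigma_{\vct\xi})$ i.i.d., dropping constants that are independent of $\Xi_{0,T}$,
\begin{equation*}
\sum_{k=0}^{T-1} \log \rho_{\vct\xi}(\xi_k) \;=\; -\tfrac{1}{2}\sum_{k=0}^{T-1} \xi_k^\intercal \Sigma_{\vct\xi}^{-1} \xi_k \;+\; \text{const} \;=\; -\tfrac{1}{2}\,\mathrm{tr}\bigl(\Sigma_{\vct\xi}^{-1}\,\Xi_{0,T}\Xi_{0,T}^\intercal\bigr) \;+\; \text{const}.
\end{equation*}
So the MLE is the minimizer of $\mathrm{tr}\bigl(\Sigma_{\vct\xi}^{-1}\Xi_{0,T}\Xi_{0,T}^\intercal\bigr)$ subject to $(X_{1,T}-\Xi_{0,T})(I_T - S^\dagger S) = 0$.

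Next I would set $P := I_T - S^\dagger S$ and exploit that $P$ is the orthogonal projector onto $\ker S$, which gives $P^\intercal = P$ and $P^2 = P$. Hence $I_T - P = S^\dagger S$ is the complementary orthogonal projector, and $P(I_T-P) = 0$. Decomposing $\Xi_{0,T} = \Xi_{0,T} P + \Xi_{0,T}(I_T-P)$ and using $P(I_T-P)=0$, the covariance-like product cleanly splits as
\begin{equation*}
\Xi_{0,T}\Xi_{0,T}^\intercal \;=\; \Xi_{0,T} P\, \Xi_{0,T}^\intercal \;+\; \Xi_{0,T}(I_T-P)\Xi_{0,T}^\intercal.
\end{equation*}
The constraint pins $\Xi_{0,T} P = X_{1,T} P$, so the first term on the right is \emph{fixed}. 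The second term satisfies $\Xi_{0,T}(I_T-P)\Xi_{0,T}^\intercal \succeq 0$, so under the positive-definite weighting $\Sigma_{\vct\xi}^{-1}$ its trace is nonnegative and vanishes iff $\Xi_{0,T}(I_T-P)=0$.

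Therefore the optimum is attained by choosing the free component to be zero, i.e.\ $\Xi_{0,T}(I_T-P) = 0$ while $\Xi_{0,T} P = X_{1,T} P$, yielding
\begin{equation*}
\Xi_{0,T}^\star \;=\; \Xi_{0,T}^\star P \;=\; X_{1,T}\,P \;=\; X_{1,T}\bigl(I_T - S^\dagger S\bigr),
\end{equation*}
which is the claimed expression. Note that this optimizer is independent of the particular $\Sigma_{\vct\xi}\succ 0$, as expected from the geometric argument. There is no serious obstacle here; the only subtlety is justifying that $I_T - S^\dagger S$ is a symmetric idempotent (it is, by the Moore--Penrose identities $(S^\dagger S)^\intercal = S^\dagger S$ and $(S^\dagger S)^2 = S^\dagger S$), which allows the clean orthogonal decomposition above.
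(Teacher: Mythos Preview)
Your proof is correct, but it follows a different route from the paper's. The paper forms the Lagrangian
\[
\mathcal{L} = \tfrac{1}{2}\mathrm{tr}\bigl(\Xi_{0,T}^\intercal \Sigma_{\vct\xi}^{-1}\Xi_{0,T}\bigr) + \mathrm{tr}\bigl(\Lambda^\intercal (X_{1,T} - \Xi_{0,T})(I_T - S^\dagger S)\bigr),
\]
sets the gradient with respect to $\Xi_{0,T}$ to zero to obtain $\Xi_{0,T} = \Sigma_{\vct\xi}\Lambda(I_T - S^\dagger S)$, substitutes back into the constraint, and then uses symmetry and idempotence of $I_T - S^\dagger S$ to eliminate the multiplier and recover the closed form. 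You instead argue geometrically: recognizing $P = I_T - S^\dagger S$ as an orthogonal projector, you split $\Xi_{0,T}$ into its $P$- and $(I_T-P)$-components, observe that the constraint freezes the $P$-component at $X_{1,T}P$ while the $(I_T-P)$-component contributes a nonnegative term to the weighted trace, and set the latter to zero. Both arguments hinge on the same Moore--Penrose facts ($P^\intercal = P$, $P^2 = P$), but your decomposition avoids Lagrange multipliers entirely and makes the independence of the optimizer from $\Sigma_{\vct\xi}\succ 0$ immediate rather than something that emerges after cancellation. The paper's KKT route, on the other hand, is the more mechanical template and connects naturally to the constrained-MLE Fisher-information analysis used elsewhere in the paper.
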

%

%

\subsection{Uncertainty Error Bounds}~\label{eq:subsec:error_bounds}
We are interested in deriving bounds to ensure robust satisfaction (with high probability) of the terminal CS constraints \eqref{eq:convexProblem_ineqConstraint2} for the DD-CS problem.
To this end, we use the statistical properties of $\Xi_{0,T}$ and generate an ellipsoidal uncertainty set based on some degree of confidence $\delta\in[0.5, 1)$.
For simplicity, assume $\Sigma_{\vct\xi} \succ 0$ is known.
First, we re-write the MLE problem \eqref{eq:ML_problem} in terms of the vectorized parameters to be estimated $\xi := \vectorize(\Xi_{0,T}) = [\xi_0^\intercal,\ldots,\xi_{T-1}^\intercal]^\intercal\in\Re^{nT}$ as
\begin{equation}
	\min_{\xi} \ \mathcal{J}_{\mathrm{MLE}}(\xi \ | \ \mathbb{D}) = \frac{1}{2} \xi^\intercal (I_{T}\otimes \Sigma_{\vct\xi}^{-1})\xi \quad \mathrm{s.t.} \quad C(\xi) := (\Gamma \otimes I_{n})\xi - \lambda = 0, \label{eq:vectorized_ML}
\end{equation}
where $\Gamma := I_{T} - S^\dagger S\in\Re^{T\times T}, \Lambda := X_{1,T} \Gamma \in\Re^{n\times T}$, and $\lambda = \vectorize (\Lambda)$.
It can be shown that \citep{MLE_statistics}, as the number of samples grows, the MLE noise estimate $\hat{\xi}$ converges to a normal distribution as
$\sqrt{T} (\xi - \hat{\xi}) \xrightarrow{d} \mathcal{N}(0, \mathcal{I}^{-1})$, where $\mathcal{I} = \E_{\xi}\left[\frac{\partial^2}{\partial\xi^2}\mathcal{J}_{\mathrm{MLE}}(\xi \ | \ \mathbb{D})\right]$
is the \textit{Fisher Information Matrix} (FIM), which is given by $\mathcal{I} = I_{T}\otimes \Sigma_{\vct\xi}^{-1}$ in the unconstrained case.
For a constrained MLE problem, it can similarly be shown, as by \cite{Cramer_Rao_Bound}, that the asymptotic distribution of the estimates has covariance $\Sigma_{\Delta} = \mathcal{I}^{-1} - \mathcal{I}^{-1}J^\intercal (J\mathcal{I}^{-1}J^\intercal)^{-1}J\mathcal{I}^{-1}$, where $J := \frac{\partial}{\partial\xi}C(\xi) = \Gamma \otimes I_{n}$ is the Jacobian of the constraints.
We next present a few technical results that are useful for computing the uncertainty covariance of the MLE estimates and the associated confidence sets.
The proofs of these results are given in the Appendix \citep{JP_RDDCS_arxiv}.

\begin{lem}~\label{lemma:MLE_convergence}
	The error of the constrained MLE estimator \eqref{eq:ML_problem} for the unknown noise realization $\xi = \vectorize (\Xi_{0,T})$ converges to the normal distribution $\mathcal{N}(0, \Sigma_{\Delta})$, where $\Sigma_{\Delta} = S^\dagger S \otimes \Sigma_{\vct \xi}$.
\end{lem}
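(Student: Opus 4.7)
The plan is to treat this as a direct computation using the constrained-MLE asymptotic covariance formula stated immediately before the lemma, combined with the Kronecker-product algebra of the problem's Fisher information and constraint Jacobian. The core observation is that $\Gamma := I_T - S^\dagger S$ is the orthogonal projector onto the kernel of $S$, so $\Gamma$ is symmetric, idempotent, and self-pseudoinverse ($\Gamma^\intercal = \Gamma$, $\Gamma^2 = \Gamma$, $\Gamma^\dagger = \Gamma$); this is what will cause all the Kronecker products to collapse cleanly.

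First, I would record the ingredients: from the quadratic MLE objective in \eqref{eq:vectorized_ML} the Fisher information is $\mathcal{I} = I_T \otimes \Sigma_{\vct\xi}^{-1}$, hence $\mathcal{I}^{-1} = I_T \otimes \Sigma_{\vct\xi}$; and the constraint Jacobian is $J = \Gamma \otimes I_n$, which inherits symmetry and idempotence from $\Gamma$. Since $J$ is rank-deficient (the equality constraint $C(\xi)=0$ is redundant along directions in the row space of $S$), I would use the Moore--Penrose version of the asymptotic covariance,
\begin{equation*}
    \Sigma_\Delta \;=\; \mathcal{I}^{-1} \;-\; \mathcal{I}^{-1} J^\intercal \bigl(J\mathcal{I}^{-1}J^\intercal\bigr)^{\dagger} J \mathcal{I}^{-1},
\end{equation*}
and briefly justify that this reduction to the pseudoinverse is exactly the projection of the unconstrained FIM onto the tangent space of the constraint manifold (equivalently, the null space of $J$).

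Second, I would push the Kronecker identities through. Using $(A\otimes B)(C\otimes D) = AC\otimes BD$ and $(A\otimes B)^\dagger = A^\dagger \otimes B^\dagger$, a short calculation gives
\begin{align*}
    \mathcal{I}^{-1} J^\intercal &= (I_T \otimes \Sigma_{\vct\xi})(\Gamma \otimes I_n) \;=\; \Gamma \otimes \Sigma_{\vct\xi},\\
    J \mathcal{I}^{-1} J^\intercal &= (\Gamma \otimes I_n)(I_T \otimes \Sigma_{\vct\xi})(\Gamma \otimes I_n) \;=\; \Gamma^2 \otimes \Sigma_{\vct\xi} \;=\; \Gamma \otimes \Sigma_{\vct\xi},\\
    \bigl(J\mathcal{I}^{-1}J^\intercal\bigr)^\dagger &= \Gamma^\dagger \otimes \Sigma_{\vct\xi}^{-1} \;=\; \Gamma \otimes \Sigma_{\vct\xi}^{-1}.
\end{align*}
Multiplying these gives $\mathcal{I}^{-1}J^\intercal (J\mathcal{I}^{-1}J^\intercal)^\dagger J\mathcal{I}^{-1} = \Gamma^3 \otimes \Sigma_{\vct\xi} = \Gamma \otimes \Sigma_{\vct\xi}$, so
\begin{equation*}
    \Sigma_\Delta \;=\; I_T\otimes\Sigma_{\vct\xi} \;-\; \Gamma \otimes \Sigma_{\vct\xi} \;=\; (I_T - \Gamma)\otimes \Sigma_{\vct\xi} \;=\; S^\dagger S \otimes \Sigma_{\vct\xi},
\end{equation*}
which is the claimed expression. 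Asymptotic normality then follows from the standard constrained-MLE delta-method argument (the quadratic cost is exact in $\xi$, so no linearization remainder arises) applied to the i.i.d. Gaussian noise samples.

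The only real subtlety I anticipate is the pseudoinverse step: one has to argue that $(J\mathcal{I}^{-1}J^\intercal)^\dagger$ is the correct object in the constrained-CRB formula when the constraints are not independent. I would handle this either by invoking the constrained CRB result of the cited reference directly, or by reducing to a minimal set of independent constraints via an SVD of $\Gamma$, checking that both paths give $S^\dagger S \otimes \Sigma_{\vct\xi}$. The remainder is bookkeeping with Kronecker identities and the projector property of $\Gamma$.
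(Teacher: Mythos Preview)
Your proposal is correct and matches the paper's proof essentially line for line: both plug $\mathcal{I}^{-1}=I_T\otimes\Sigma_{\vct\xi}$ and $J=\Gamma\otimes I_n$ into the constrained-CRB formula, use the symmetry and idempotence of $\Gamma=I_T-S^\dagger S$ to collapse the Kronecker products to $\Gamma\otimes\Sigma_{\vct\xi}$, and subtract to get $(I_T-\Gamma)\otimes\Sigma_{\vct\xi}=S^\dagger S\otimes\Sigma_{\vct\xi}$. If anything, you are more careful than the paper, which writes an ordinary inverse $[\cdot]^{-1}$ where the matrix $J\mathcal{I}^{-1}J^\intercal=\Gamma\otimes\Sigma_{\vct\xi}$ is singular; your explicit use of the pseudoinverse and the identity $\Gamma^\dagger=\Gamma$ is the right way to make that step rigorous.
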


\begin{prop}~\label{prop:ambiguity_sets}
	Assume the uncertainty error estimate is normally distributed as $\Delta \xi \sim\mathcal{N}(0,\Sigma_{\Delta})$.
 Then, given some level of risk $\delta\in[0.5, 1)$, the uncertainty set $\Delta := \{\|\Delta\Xi_{0,T}\| \leq \rho\}$,	contains the $(1-\delta)$-quantile of $\Delta\Xi_{0,T}$, with 
    $
        \rho = \frac{\rchi_{nT, 1 - \delta}}{\sqrt{\lambda_{\min}(\Sigma_{\Delta}^{-1})}}
    $,
    where $\rchi_{p,q}$ is the square root of the inverse CDF of the $\rchi_{p,q}^2$ distribution.
\end{prop}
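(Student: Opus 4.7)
The plan is to derive the confidence set by exploiting the Gaussianity of $\Delta \xi$ and translating the resulting Mahalanobis-distance quantile into a Euclidean (Frobenius) norm bound via a worst-case eigenvalue argument. Since $\Delta \xi = \vectorize(\Delta \Xi_{0,T}) \in \Re^{nT}$ and the Frobenius norm of a matrix equals the Euclidean norm of its vectorization, we have $\|\Delta \Xi_{0,T}\|_{\mathrm{F}} = \|\Delta \xi\|$; I will work with $\Delta \xi$ throughout and pass to the matrix norm at the end.

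First, I would form the quadratic form $q(\Delta \xi) \triangleq \Delta\xi^\intercal \Sigma_{\Delta}^{-1} \Delta \xi$. Since $\Delta \xi \sim \mathcal{N}(0, \Sigma_\Delta)$, this is a standard result: $q(\Delta \xi)$ is distributed as $\chi^2_{nT}$ (the dimension of $\Delta \xi$). Hence, given a confidence level $1 - \delta$ with $\delta \in [0.5, 1)$, the event $\{q(\Delta \xi) \leq \rchi^2_{nT, 1-\delta}\}$ holds with probability at least $1 - \delta$, where $\rchi^2_{nT,1-\delta}$ is the inverse CDF of the chi-squared distribution evaluated at $1-\delta$.

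Next, I would lower-bound the Mahalanobis distance by the Euclidean norm using the Rayleigh quotient inequality,
\begin{equation*}
    \lambda_{\min}(\Sigma_\Delta^{-1}) \, \|\Delta \xi\|^2 \;\leq\; \Delta\xi^\intercal \Sigma_{\Delta}^{-1} \Delta \xi,
\end{equation*}
which, combined with the chi-squared quantile bound, yields $\|\Delta \xi\|^2 \leq \rchi^2_{nT, 1-\delta}/\lambda_{\min}(\Sigma_{\Delta}^{-1})$ with probability at least $1-\delta$. Taking square roots and using $\rchi_{p,q} = \sqrt{\rchi^2_{p,q}}$ and $\|\Delta \Xi_{0,T}\| = \|\Delta \xi\|$ gives exactly $\|\Delta \Xi_{0,T}\| \leq \rho$ with $\rho$ as claimed, so the set $\Delta$ contains the $(1-\delta)$-quantile of $\Delta \Xi_{0,T}$.

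The main obstacle is the following subtlety: by Lemma~\ref{lemma:MLE_convergence}, $\Sigma_\Delta = S^\dagger S \otimes \Sigma_{\vct\xi}$ where $S^\dagger S$ is an orthogonal projector, so $\Sigma_\Delta$ is in general rank-deficient and $\Sigma_\Delta^{-1}$ must be interpreted as a (Moore--Penrose) pseudoinverse restricted to the range of $\Sigma_\Delta$. I would address this by observing that $\Delta \xi$ lies almost surely in the range of $\Sigma_\Delta$, so $q(\Delta\xi)$ is a chi-squared random variable with $nT$ replaced by $\mathrm{rank}(\Sigma_\Delta) = \mathrm{rank}(S^\dagger S)\cdot n$ degrees of freedom, and the Rayleigh inequality is applied with $\lambda_{\min}$ taken over the nonzero eigenvalues of $\Sigma_\Delta^{-1}$ (equivalently, the reciprocal of the largest nonzero eigenvalue of $\Sigma_\Delta$). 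Modulo this convention on $\lambda_{\min}$ and on the degrees of freedom, the argument above goes through verbatim and yields the stated ellipsoidal bound.
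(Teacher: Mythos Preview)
Your argument is essentially identical to the paper's: chi-squared quantile for the Mahalanobis form, the Rayleigh-quotient lower bound $\lambda_{\min}(\Sigma_\Delta^{-1})\|\Delta\xi\|^2\le\Delta\xi^\intercal\Sigma_\Delta^{-1}\Delta\xi$, and then passage to the matrix norm via $\|\Delta\Xi_{0,T}\|_{\mathrm F}=\|\Delta\xi\|$. The only slip is your final line ``$\|\Delta\Xi_{0,T}\|=\|\Delta\xi\|$'': since $\|\cdot\|$ denotes the spectral norm here, you need the inequality $\|\Delta\Xi_{0,T}\|\le\|\Delta\Xi_{0,T}\|_{\mathrm F}=\|\Delta\xi\|$, which is precisely the last step the paper uses; your additional remark on the rank deficiency of $\Sigma_\Delta$ is a legitimate caveat that the paper leaves implicit in this proposition (and handles only later via an $\epsilon$-perturbation in the proof of Corollary~\ref{cor:noise_error_bound}).
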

\begin{cor}~\label{cor:noise_error_bound}
	For the MLE problem \eqref{eq:ML_problem}, the associated $(1-\delta)$-quantile uncertainty set
 $\Delta := \{\|\Delta\Xi_{0,T}\| \leq \rho\}$
 has the bound $\rho = \|\Sigma_{\vct\xi}^{1/2}\| \rchi_{nT,1-\delta}$, where $\Delta\Xi_{0,T} = \Xi_{0,T} - \hat{\Xi}_{0,T}$.
\end{cor}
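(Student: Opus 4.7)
The plan is to combine the distributional result of Lemma~\ref{lemma:MLE_convergence} with the generic quantile bound of Proposition~\ref{prop:ambiguity_sets} and then reduce the resulting eigenvalue expression to a norm of $\Sigma_{\vct\xi}^{1/2}$. Concretely, Lemma~\ref{lemma:MLE_convergence} supplies the closed form $\Sigma_{\Delta}=S^{\dagger}S\otimes \Sigma_{\vct\xi}$ for the asymptotic covariance of $\Delta\xi=\vectorize(\Delta\Xi_{0,T})$, and Proposition~\ref{prop:ambiguity_sets} gives $\rho=\rchi_{nT,1-\delta}/\sqrt{\lambda_{\min}(\Sigma_{\Delta}^{-1})}$; thus the entire task reduces to identifying $\lambda_{\min}(\Sigma_{\Delta}^{-1})$, or equivalently $\lambda_{\max}(\Sigma_{\Delta})$, in terms of $\Sigma_{\vct\xi}$ alone.

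The key steps I would carry out, in order, are as follows. First, use the identity $\lambda_{\min}(M^{-1})=1/\lambda_{\max}(M)$ (with $M\succ 0$ on its support) to rewrite $\rho=\rchi_{nT,1-\delta}\sqrt{\lambda_{\max}(\Sigma_{\Delta})}$. Second, invoke the spectrum of a Kronecker product: for symmetric positive semidefinite $A$ and $B$, the eigenvalues of $A\otimes B$ are the pairwise products $\lambda_i(A)\lambda_j(B)$, so $\lambda_{\max}(S^{\dagger}S\otimes \Sigma_{\vct\xi})=\lambda_{\max}(S^{\dagger}S)\,\lambda_{\max}(\Sigma_{\vct\xi})$. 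Third, observe that under Assumption~\ref{assum_rank} the matrix $S^{\dagger}S$ is the orthogonal projector onto the row space of $S$; in particular it is idempotent and nonzero, so its eigenvalues lie in $\{0,1\}$ and $\lambda_{\max}(S^{\dagger}S)=1$. Combining the three steps yields $\lambda_{\max}(\Sigma_{\Delta})=\lambda_{\max}(\Sigma_{\vct\xi})$. Finally, since $\Sigma_{\vct\xi}^{1/2}$ is symmetric positive semidefinite, $\|\Sigma_{\vct\xi}^{1/2}\|=\sigma_{\max}(\Sigma_{\vct\xi}^{1/2})=\sqrt{\lambda_{\max}(\Sigma_{\vct\xi})}$, which gives the claimed $\rho=\|\Sigma_{\vct\xi}^{1/2}\|\,\rchi_{nT,1-\delta}$.

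The main obstacle I anticipate is not an algebraic one but rather a careful bookkeeping of the degenerate direction of the Gaussian. Because $S^{\dagger}S$ has rank $m+n<T$ in the regime of interest, $\Sigma_{\Delta}$ is singular and $\Sigma_{\Delta}^{-1}$ must be interpreted on the range of $\Sigma_{\Delta}$ (e.g.\ as a Moore--Penrose pseudoinverse), so one has to check that the quantile statement of Proposition~\ref{prop:ambiguity_sets} still goes through on this subspace, where the only nonzero eigenvalues of $\Sigma_{\Delta}$ are exactly $\lambda_i(\Sigma_{\vct\xi})$ repeated $m+n$ times. Once this is justified, the bound in Proposition~\ref{prop:ambiguity_sets} tightens to the spectral norm $\|\Sigma_{\vct\xi}^{1/2}\|$ without any loss, and the corollary follows.
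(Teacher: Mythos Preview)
Your proposal is correct and follows essentially the same route as the paper: invoke Lemma~\ref{lemma:MLE_convergence} for $\Sigma_{\Delta}=S^{\dagger}S\otimes\Sigma_{\vct\xi}$, plug into Proposition~\ref{prop:ambiguity_sets}, use the Kronecker spectrum identity, and exploit that $S^{\dagger}S$ is a projector with $\lambda_{\max}=1$. The only difference is in how the singularity of $\Sigma_{\Delta}$ is handled: the paper regularizes by replacing $S^{\dagger}S$ with $S^{\dagger}S+\epsilon I_{T}$, computes $\lambda_{\min}\big((\Sigma_{\Delta}^{\epsilon})^{-1}\big)=\|\Sigma_{\vct\xi}^{1/2}\|^{-2}(1+\epsilon)^{-1}$, and then sends $\epsilon\to 0$, whereas you bypass the inverse entirely by working with $\lambda_{\max}(\Sigma_{\Delta})$ directly and interpreting $\Sigma_{\Delta}^{-1}$ as a pseudoinverse on the range; both devices resolve the same technicality and yield the same bound.
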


\section{Robust DD-CS}~\label{sec:RDDCS}

\vspace*{-0.5cm}
The problem outlined in \eqref{eq:convexProblem} is categorized as an uncertain convex program~\citep{robust_optimization}, where the LMI constraints \eqref{eq:convexProblem_ineqConstraint2} are required to hold across all realizations of the uncertain parameter $\Delta\Xi_{0,T}$ within the set $\Delta$ defined in Proposition~\ref{prop:ambiguity_sets}.
This formulation results in a semi-infinite problem, making the original constraints intractable. 
To address this, we focus on the robust counterpart (RC) of a class of uncertain LMIs (uLMI), which simplifies the problem to a robust feasibility problem.  
Letting $\mathcal{A}(Z_k, \Xi_{0,T}) \succeq 0$ denote the LMI in \eqref{eq:convexProblem_ineqConstraint2} for the decision variables $Z_k = \{S_k, \Sigma_k,\Sigma_{k+1}\}$, we can ensure tractability by enforcing the condition $\sup_{\Delta\Xi_{0,T}\in\Delta} \mathcal{A}(Z_k, \Xi_{0,T}) \succeq 0$.
Adhering to this robust counterpart ensures that the original constraints are robustly met with high probability, characterized by the risk value $\delta$.
In what follows, we will form the RC of the uncertain CS problem in \eqref{eq:convexProblem}.
We first state the following theorem on the equivalence of the RC of a uLMI in which the uncertainty appears linearly in the constraints.

\begin{theorem}[\cite{robust_optimization}, Proposition 6.4.1]~\label{thm:uncertain_LMI_RC}
	The RC of the uncertain LMI
	\begin{equation}
		\mathcal{A}(y, \Pi) := \hat{\mathcal{A}}(y) + L^\intercal(y)\Pi R + R^\intercal \Pi^\intercal L(y) \succeq 0, \label{eq:uncertain_linear_LMI}
	\end{equation}
	with unstructured norm-bounded uncertainty set $\mathcal{Z} = \{\Pi \in\Re^{p\times q} : \|\Pi \| \leq \rho\}$, can be  equivalently represented by the LMI
	\begin{equation}
		\begin{bmatrix}
			\lambda I_{p} & \rho L(y) \\
			\rho L^\intercal(y) & \hat{\mathcal{A}}(y) - \lambda R^\intercal R
		\end{bmatrix} \succeq 0, \label{eq:tractable_RC}
	\end{equation}
	in the decision variables $y,\lambda$.
\end{theorem}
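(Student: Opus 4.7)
The plan is to reduce the semi-infinite robust LMI to a finite LMI via Petersen's matrix lemma, exploiting the losslessness of the S-procedure for unstructured, norm-bounded matrix uncertainty. The first step is to normalize the uncertainty by setting $\tilde\Pi := \Pi/\rho$, so that the robust condition $\mathcal{A}(y,\Pi) \succeq 0$ for all $\|\Pi\| \leq \rho$ becomes
\[
\hat{\mathcal{A}}(y) + \rho L^\intercal(y)\tilde\Pi R + \rho R^\intercal \tilde\Pi^\intercal L(y) \succeq 0, \quad \forall\, \|\tilde\Pi\| \leq 1.
\]
This isolates $\rho$ as a multiplicative factor on the perturbation blocks and reduces the problem to the canonical unit-ball form that Petersen's lemma addresses directly.

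Next I would invoke Petersen's lemma, which asserts that for symmetric $Q$ and matrices $P, R$ of compatible dimensions, the condition $Q + P^\intercal \tilde\Pi R + R^\intercal \tilde\Pi^\intercal P \succeq 0$ for all $\|\tilde\Pi\| \leq 1$ is equivalent to the existence of $\lambda \geq 0$ such that $Q - \lambda R^\intercal R - \lambda^{-1} P^\intercal P \succeq 0$ (interpreting $\lambda = 0$ as a limit). Setting $P = \rho L(y)$ and $Q = \hat{\mathcal{A}}(y)$ then gives, for $\lambda > 0$, the scalar-dependent matrix inequality
\[
\hat{\mathcal{A}}(y) - \lambda R^\intercal R - (\rho^2/\lambda)\, L^\intercal(y) L(y) \succeq 0.
\]
Applying a Schur complement on a $\lambda I_p$ block to eliminate the nonlinear $\lambda^{-1}$ dependence recovers exactly the LMI \eqref{eq:tractable_RC}.

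The main obstacle is the lossless (``only if'') direction of Petersen's lemma. The ``if'' direction is a straightforward completion of squares: from $\tilde\Pi\tilde\Pi^\intercal \preceq I$ and $(\lambda^{-1/2} P - \lambda^{1/2} R^\intercal \tilde\Pi^\intercal)^\intercal(\lambda^{-1/2} P - \lambda^{1/2} R^\intercal \tilde\Pi^\intercal) \succeq 0$, one deduces $P^\intercal \tilde\Pi R + R^\intercal \tilde\Pi^\intercal P \preceq \lambda^{-1} P^\intercal P + \lambda R^\intercal R$, which immediately implies the robust constraint when the algebraic condition on $\lambda$ holds. The converse is the substantive step: I would establish it via SDP duality, writing the worst-case value $\inf_{\|\tilde\Pi\|\leq 1} \lambda_{\min}(Q + P^\intercal\tilde\Pi R + R^\intercal\tilde\Pi^\intercal P)$ as a convex conic program, invoking strong duality (Slater's condition holds because $\tilde\Pi = 0$ lies strictly interior to the unit ball), and recovering $\lambda$ as the optimal dual multiplier associated with the spectral-norm constraint. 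An alternative, more constructive route uses a rank-one worst-case perturbation: if no $\lambda \geq 0$ satisfies the algebraic condition, then for a witness $z$ of infeasibility one can align $\tilde\Pi$ with the singular vectors of $L(y)z$ and $Rz$ to build a perturbation of norm at most one that violates the robust LMI.

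Finally, I would verify the block structure explicitly: the inequality displayed above is, by Schur complement, equivalent to \eqref{eq:tractable_RC} whenever $\lambda > 0$, and the degenerate case $\lambda = 0$ forces $L(y) = 0$ together with $\hat{\mathcal{A}}(y) \succeq 0$, which \eqref{eq:tractable_RC} also captures. Combining the two directions of Petersen's lemma with the rescaling and Schur complement steps closes the equivalence and certifies \eqref{eq:tractable_RC} as the exact robust counterpart of \eqref{eq:uncertain_linear_LMI}.
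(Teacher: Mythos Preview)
Your proposal is correct. The normalization $\tilde\Pi=\Pi/\rho$, the appeal to Petersen's lemma for the lossless equivalence, and the Schur complement to linearize the $\lambda^{-1}$ term together give exactly \eqref{eq:tractable_RC}; your handling of the ``if'' direction by completing the square and of the ``only if'' direction via SDP strong duality (or the rank-one witness construction) is the standard and rigorous route.

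By contrast, the paper does not actually supply its own proof: Theorem~\ref{thm:uncertain_LMI_RC} is stated with a citation to \cite{robust_optimization}, and the appendix section that would have contained the argument is commented out in the source. That draft proof follows a different line than yours: it scalarizes the LMI via a test vector $\xi$, computes the worst-case perturbation explicitly by solving $\min_{\|\Pi\|\le\rho}\,\xi^\intercal L^\intercal(y)\Pi R\xi=-\rho\|L(y)\xi\|\|R\xi\|$, and then invokes the $\mathcal{S}$-Lemma to recover the block LMI. In effect you package the scalarization and $\mathcal{S}$-procedure inside Petersen's lemma and then unwind with a Schur complement, whereas the paper's draft carries out those steps by hand. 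Your route is cleaner and avoids the somewhat delicate passage in the draft from the product $\|L(y)\xi\|\|R\xi\|$ to a two-vector quadratic form amenable to the $\mathcal{S}$-Lemma (a step the authors themselves flagged as needing discussion). The paper's route, had it been completed, has the pedagogical advantage of exposing where the losslessness comes from; your invocation of Petersen's lemma hides that but is entirely legitimate given that its proof is well documented.
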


\subsection{Problem Formulation and Robust LMI Counterpart}~\label{subsec:RDDS_formulation}

Using \eqref{eq:ML_problem_DC} to obtain a noise realization estimate $\hat{\Xi}_{0,T}$ and Corollary~\ref{cor:noise_error_bound} to obtain the associated uncertainty set $\Delta$,
the original covariance LMI constraints \eqref{eq:convexProblem_ineqConstraint2}, may be decomposed using $\Xi_{0,T} = \hat{\Xi}_{0,T} + \Delta\Xi_{0,T}$ as the semi-infinite uLMIs
$
    \hat{G}_k^{\Sigma} + \Delta G_k^{\Sigma}(\Delta\Xi_{0,T}) \succeq 0$, for all $\|\Delta\Xi_{0,T}\| \leq \rho,
$
where,
\begin{equation}
	\hat{G}_{k}^\Sigma = 
	\begin{bmatrix}
		\Sigma_{k+1} - \Sigma_{\vct\xi} & (X_{1,T} - \hat{\Xi}_{0,T})S_k \\
		S_k^\intercal(X_{1,T} - \hat{\Xi}_{0,T})^\intercal & \Sigma_{k} \label{eq:nominal_matrix}
	\end{bmatrix},
\end{equation}
is the nominal LMI, and
\begin{equation}
	\Delta G_k^{\Sigma} = 
	\begin{bmatrix}
		0_{n} & -\Delta \Xi_{0,T} S_k \\
		-S_k^\intercal \Delta \Xi_{0,T}^\intercal & 0_{n} 
	\end{bmatrix} \succeq 0, \label{eq:pertburation_matrix}
\end{equation}
is the perturbation to the covariance LMI.
Next, we represent the matrix \eqref{eq:pertburation_matrix} as
\begin{equation}
	\Delta G_k^{\Sigma} = L^\intercal (S_k) \Delta \Xi_{0,T}^\intercal R + R^\intercal \Delta \Xi_{0,T} L(S_k), \label{eq:uncertain_LMI_decomposition}
\end{equation}
where,
$
    L^\intercal (S_k) = [0_{n,T}; -S_k^\intercal], \ R^\intercal = [I_{n}; 0_{n}].
$
Finally, using Theorem~\ref{thm:uncertain_LMI_RC}, we may equivalently represent the RC of the uLMI \eqref{eq:uncertain_LMI_decomposition} as the LMI
\begin{equation}
	\begin{bmatrix}
		\lambda I_{T} & \rho L(S_k) \\
		\rho L^\intercal (S_k) & \hat{G}_k^{\Sigma}(Z_k, S_k) - \lambda R^\intercal R 
	\end{bmatrix} \succeq 0,
\end{equation}
in terms of the decision variables $\{S_k, \lambda\}$ and $Z_k = \{\Sigma_k, \Sigma_{k+1}\}$.
\begin{figure}[!htb]
	\centering
	\hspace*{-0.3cm}
	\subfigure[MB-CS (top) and R/DD-CS (bottom) for nominal dynamics model $\{A, B\}$.]{%
		\label{fig:1a}%
		\includegraphics[trim={1.5cm 3cm 2.3cm 2.1cm}, clip, scale=0.36]{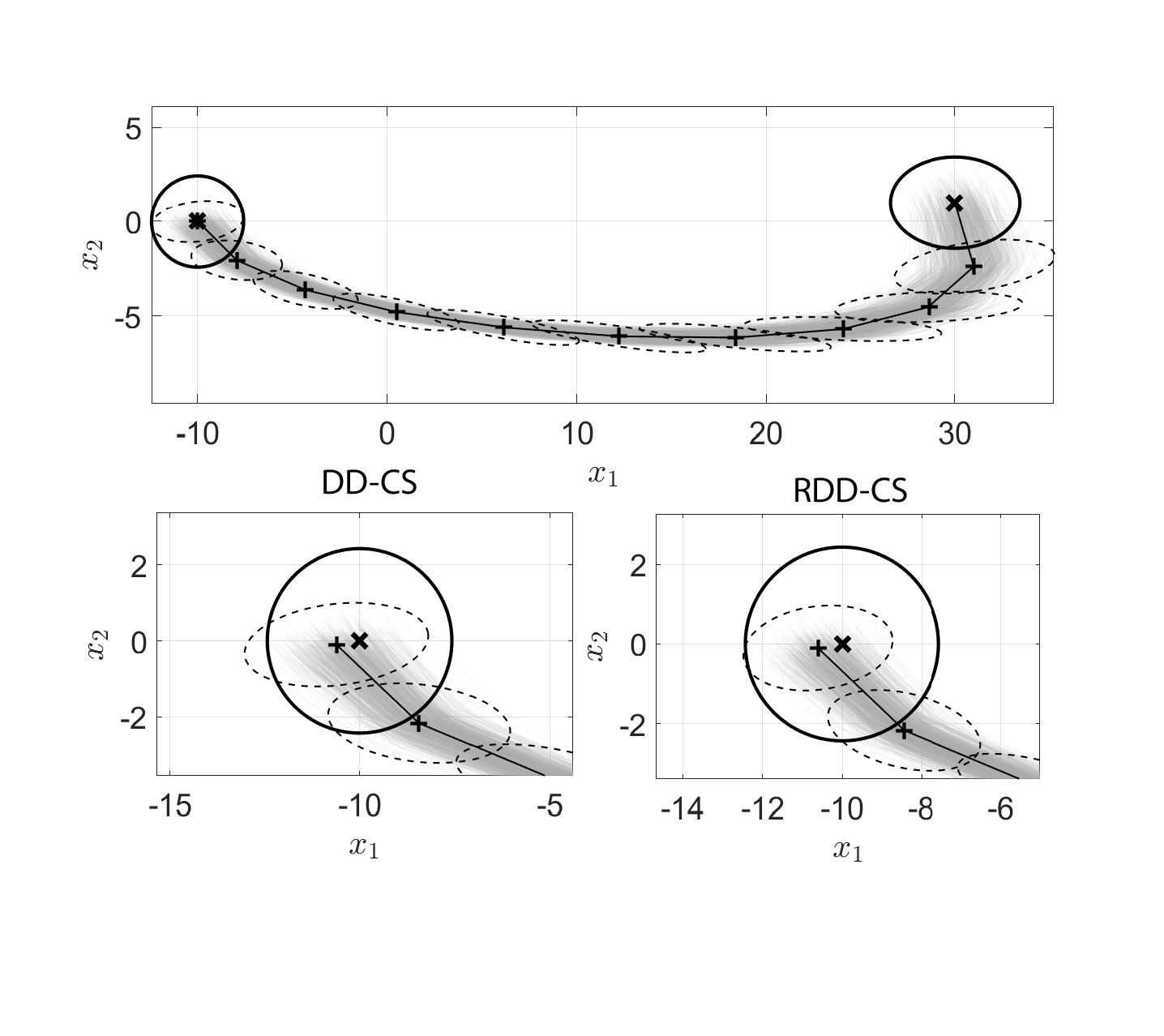}}
	\hspace{0.3cm}
	\subfigure[MB-CS (top) and R/DD-CS (bottom) for perturbed dynamics model $\{A + \Delta A, B + \Delta B\}$.]{%
		\label{fig:1b}%
		\includegraphics[trim={1.5cm 3cm 2cm 2.1cm}, clip, scale=0.36]{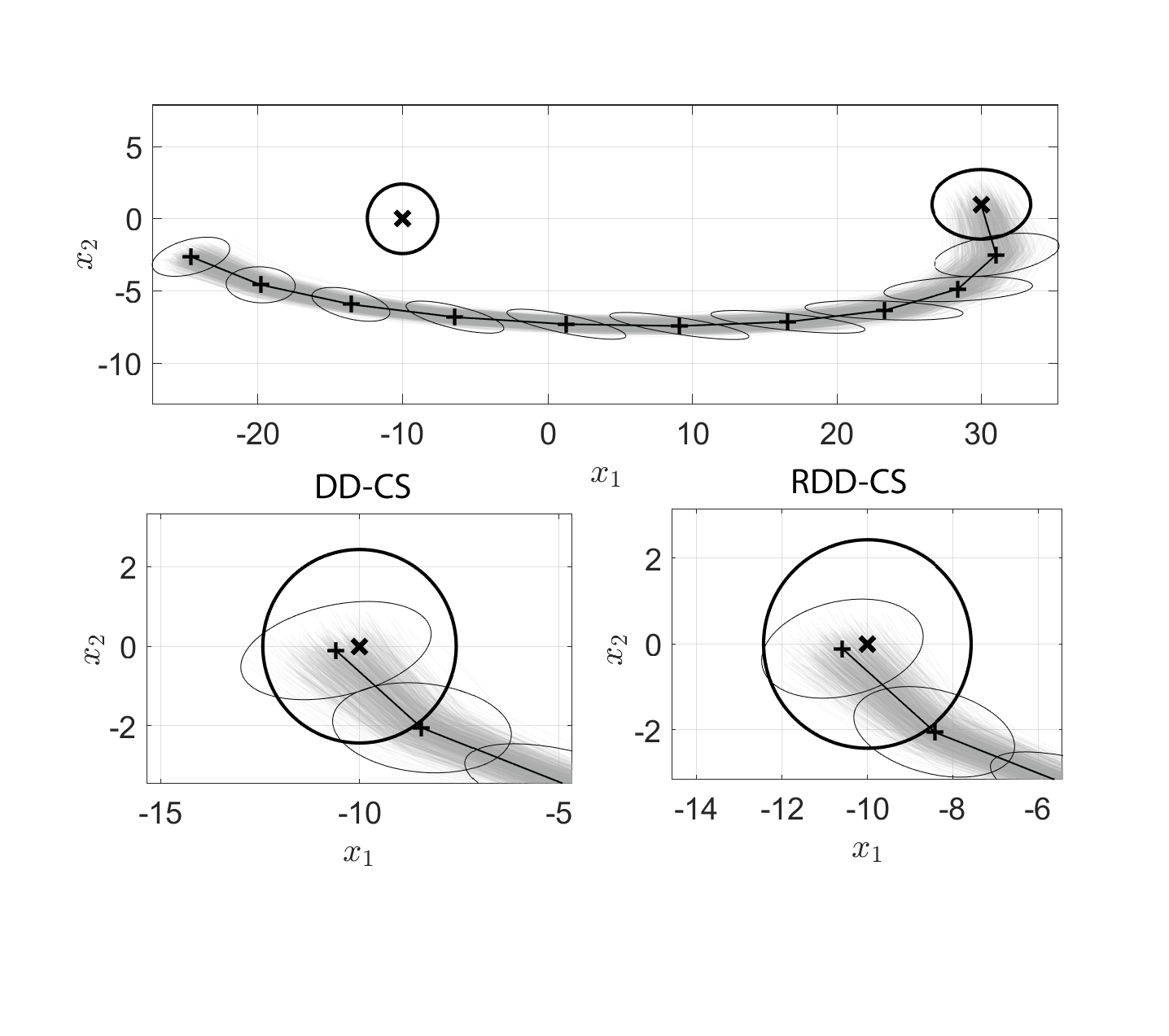}}
	\caption{Comparison of model-based and data-driven covariance steering controllers for (a) perfect system knowledge, and (b) imperfect system knowledge. The robust data-driven solutions are not only adaptable to any linear dynamics due to sampling from the true system but also achieve a desirable terminal covariance.}
	\label{fig:1}
\end{figure}

\section{Numerical Example}~\label{sec:sims}
\vspace*{-0.01cm}
To illustrate the proposed robust DD-CS  (RDD-CS) framework, we consider the ground truth double integrator model
\begin{equation}
	x_{k+1} = 
	\begin{bmatrix}
		1 & \Delta T \\
		0 & 1
	\end{bmatrix} x_{k} + 
	\begin{bmatrix}
		0 \\
		\Delta T
	\end{bmatrix} u_{k} + D w_{k},
\end{equation}
with $\Delta T = 1$ and $D = 0.1 I_{2}$, with initial conditions $\mu_0 = [30, 1]^\intercal, \Sigma_0 = \mathrm{diag}(1, 0.5)$ and terminal conditions $\mu_f = [-10, 0]^\intercal, \Sigma_f = 0.5 I_{2}$.
The planning horizon is chosen to be $N = 10$ and the data-collection horizon $T = 15$ which satisfies the persistence of excitation criterion.
For the uncertainty set, we choose a risk threshold $\delta = 0.001$.
Figure~\ref{fig:1} shows the performance of the model-based and data-driven covariance controllers.
In Figure~\ref{fig:1a}, we see that MB-CS achieves exact uncertainty control, as expected since it has access to the \textit{true} model.
The DD-CS method with MLE noise estimation also achieves similar terminal behavior, although the vanilla DD-CS solution (bottom-left) violates the terminal covariance constraints $\Sigma_{N} \preceq\Sigma_f$, due to the misalignment in the mean steering from noisy data.
%
%
%
The robust DD-CS (RDD-CS) solution (bottom-right), on the other hand, achieves a more precise terminal covariance and is fully contained within the desired terminal covariance, even with the slight mean misalignment.
Specifically, the robust solution achieves a terminal covariance
\begin{equation*}
	\Sigma_{N}^{\mathrm{RDD-CS}} = 
	\begin{bmatrix}
		0.2612 & 0.0252 \\
		0.0252 & 0.0941
	\end{bmatrix} \prec \Sigma_f.
\end{equation*}
In Figure~\ref{fig:1b}, we illustrate the performance of the MB and DD controllers in the case where the \textit{true} nominal dynamics model is perturbed with $\Delta A = \tau e_1 e_2^\intercal$ and $\Delta B = \tau e_2$, with $\tau = 0.05$ and $e_i$ is the unit vector along the $i$th axis.
In this case, the model-based design completely fails, as it takes the original model $\{A, B\}$ as the ground truth, while the data-driven design can automatically adapt to any model as it samples data $\mathbb{D}$ from the true underlying system.

\begin{wrapfigure}[19]{r}{0.47\textwidth}
\vspace*{-13pt}
	\centering
	\subfigure[MB-CS solution.]{%
		\label{fig:3a}%
		\includegraphics[trim={1cm 5.5cm 2cm 7cm}, clip, width=0.44\textwidth]{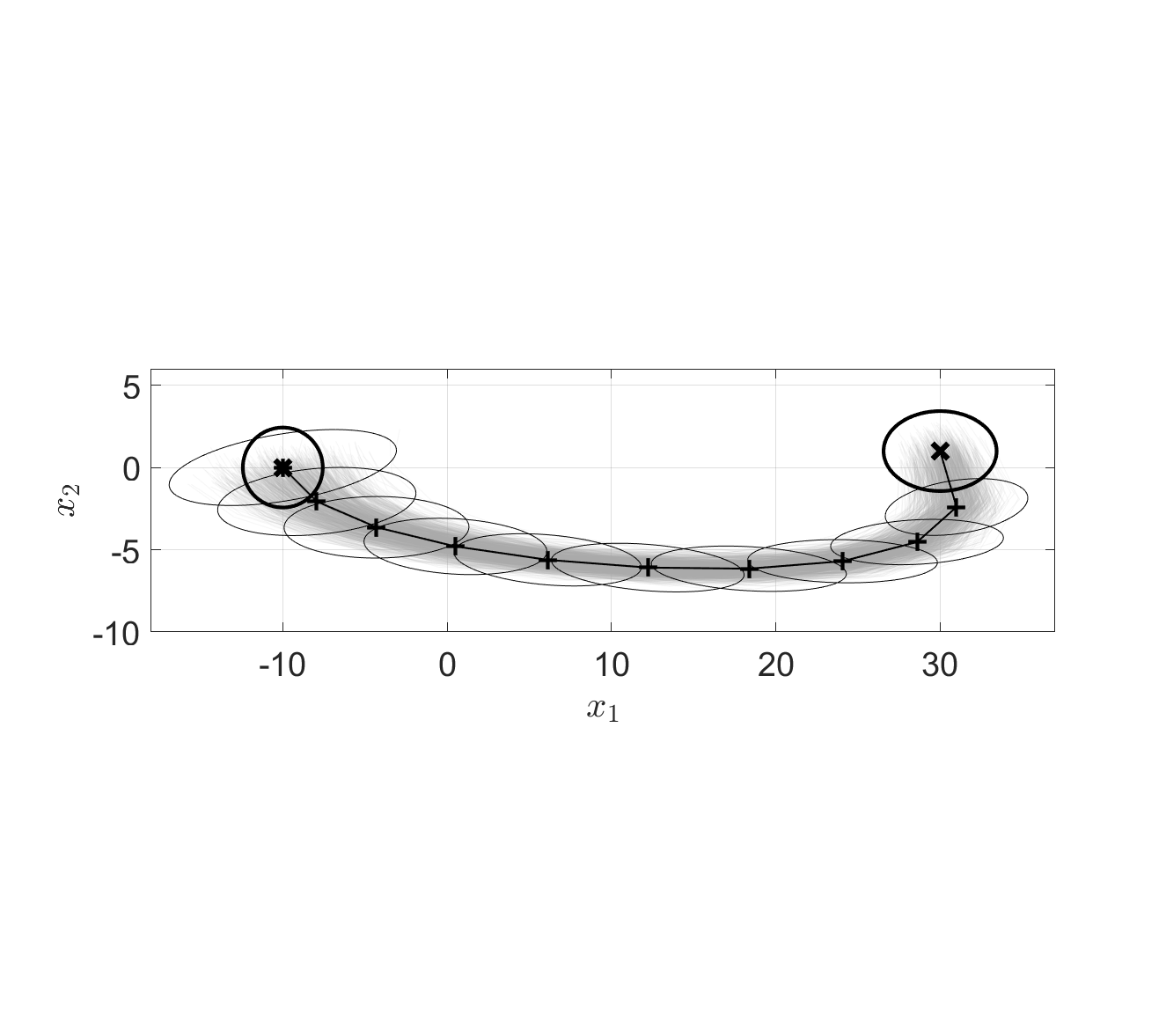}}
	\subfigure[RDD-CS solution.]{%
		\label{fig:3b}%
		\includegraphics[trim={1cm 5.5cm 2cm 6.5cm}, clip, width=0.44\textwidth]{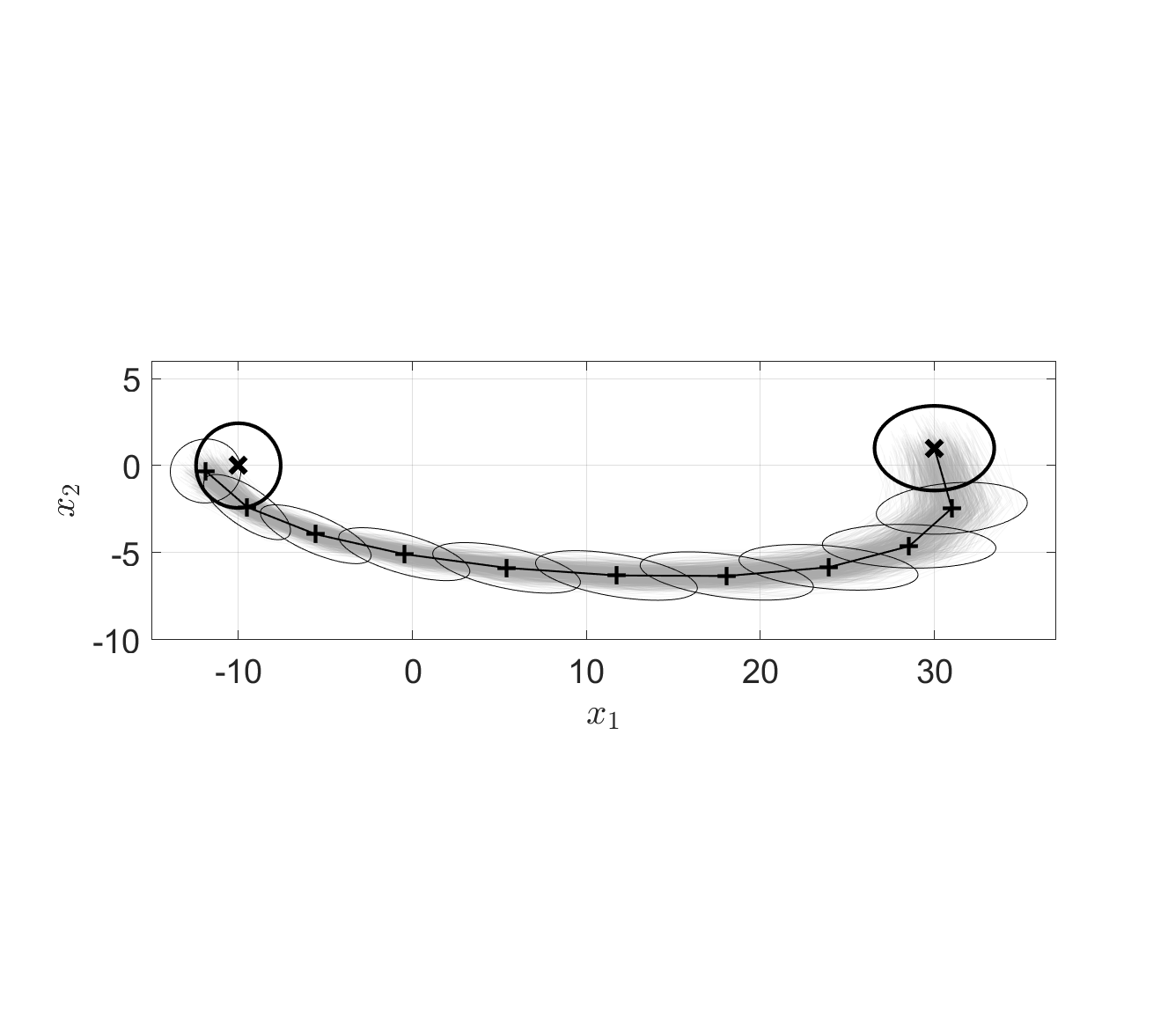}}
	\caption{Model-based (a) and robust data-driven (b) covariance control designs for perturbed disturbance model $D + \Delta D$.
}
	\label{fig:3}
\end{wrapfigure}
Next, we include the estimation of the disturbance covariance $\Sigma_{\vct\xi}$ into the DD-CS algorithm.
The estimate $\hat{\Sigma}_{\vct\xi}$ is subsequently used in the resulting RDD-CS program.
The results are akin to those of Figure~\ref{fig:1} and hence are omitted.
We then perturb the disturbance matrix from the ground truth model with $\Delta D = 0.2I_{2}$, representing cases in which the modeler cannot effectively quantify the intensity of the disturbance.
Figure~\ref{fig:3} shows that under this new disturbance structure, the model-based design fails as it anticipates weaker disturbances than the actual system experiences, while the data-driven solution achieves the desired terminal covariance, albeit the terminal mean state is more heavily perturbed due to the increased uncertainty.
Designing a robust open-loop control to also take into account the uncertainty in the mean motion is a topic of interest for future work.

\vspace*{-0.07cm}
\section{Conclusion}~\label{sec:conclusion}

We have presented a data-driven uncertainty control method to steer the distribution of an unknown linear dynamical system subject to Gaussian disturbances (DD-CS).
Since the underlying collected data is noisy, an exact system representation is infeasible, and thus we have proposed a maximum likelihood estimation (MLE) scheme to solve for the underlying noise realization from the data as a difference-of-convex program.
Using the statistical properties of MLE, we derived bounds on the $(1-\delta)$-quantile of the uncertainty estimates, which is subsequently used to solve a robust 
DD-CS problem.
Numerical examples show that the robust DD-CS method achieves desirable levels of performance for an underlying unknown linear system compared to that of the model-based counterpart and any disturbance intensity satisfying the theoretically achievable bound~\citep{exact_CS}.
Future work will investigate robust solutions to the mean steering problem in the case of noisy data, as well as extensions to output-feedback systems, moving-horizon implementations, and non-linear systems.

\section{Acknowledgment}

This work has been supported by NASA University Leadership Initiative award 80NSSC20M0163 and ONR award N00014-18-1-2828.
The article solely reflects the opinions and conclusions of its authors and not any NASA entity.


\bibliography{refs.bib}

\begin{thebibliography}{31}
\providecommand{\natexlab}[1]{#1}
\providecommand{\url}[1]{\texttt{#1}}
\expandafter\ifx\csname urlstyle\endcsname\relax
  \providecommand{\doi}[1]{doi: #1}\else
  \providecommand{\doi}{doi: \begingroup \urlstyle{rm}\Url}\fi

\bibitem[Bakolas(2016)]{Bakolas_constrained_CS}
E.~Bakolas.
\newblock Optimal covariance control for discrete-time stochastic linear systems subject to constraints.
\newblock In \emph{55th IEEE Conference on Decision and Control (CDC)}, pages 1153--1158, Las Vegas, NV, Dec. 12--14, 2016.

\bibitem[Ben-Tal et~al.(2009)Ben-Tal, Ghaoui, and Nemirovski]{robust_optimization}
A.~Ben-Tal, L.~E. Ghaoui, and A.~Nemirovski.
\newblock \emph{Robust Optimization}, volume~28 of \emph{Princeton Series in Applied Mathematics}.
\newblock Princeton University Press, 2009.

\bibitem[Billingsley(2008)]{billingsley_probability_2012}
P.~Billingsley.
\newblock \emph{Probability and Measure}.
\newblock Wiley, 2008.

\bibitem[Bisoffi et~al.(2022)Bisoffi, De~Persis, and Tesi]{DataDriven_Petersen}
A.~Bisoffi, C.~De~Persis, and P.~Tesi.
\newblock Data-driven control via {Petersen’s} lemma.
\newblock \emph{Automatica}, 145:\penalty0 110537, 2022.
\newblock ISSN 0005-1098.

\bibitem[De~Persis and Tesi(2020)]{DataDriven_dePersis_1}
C.~De~Persis and P.~Tesi.
\newblock Formulas for data-driven control: Stabilization, optimality, and robustness.
\newblock \emph{IEEE Transactions on Automatic Control}, 65\penalty0 (3):\penalty0 909--924, 2020.

\bibitem[Dörfler et~al.(2022)Dörfler, Tesi, and De~Persis]{DataDriven_regularization}
F.~Dörfler, P.~Tesi, and C.~De~Persis.
\newblock On the role of regularization in direct data-driven {LQR} control.
\newblock In \emph{61st IEEE Conference on Decision and Control (CDC)}, pages 1091--1098, Cancún, Mexico, Dec. 6--9, 2022.

\bibitem[Dörfler et~al.(2023)Dörfler, Tesi, and De~Persis]{DataDriven_CE}
F.~Dörfler, P.~Tesi, and C.~De~Persis.
\newblock On the certainty-equivalence approach to direct data-driven {LQR} design.
\newblock \emph{IEEE Transactions on Automatic Control}, pages 1--8, 2023.

\bibitem[Feldbaum(1963)]{dual_control}
A.~Feldbaum.
\newblock Dual control theory problems.
\newblock \emph{IFAC Proceedings Volumes}, 1\penalty0 (2):\penalty0 541--550, 1963.

\bibitem[Gevers(2005)]{sysID_for_control}
M.~Gevers.
\newblock Identification for control: From the early achievements to the revival of experiment design.
\newblock \emph{European Journal of Control}, 11\penalty0 (4):\penalty0 335--352, 2005.

\bibitem[Hotz and Skelton(1987)]{HS2}
A.~F. Hotz and R.~E. Skelton.
\newblock Covariance control theory.
\newblock \emph{International Journal of Control}, 46\penalty0 (1):\penalty0 13--32, 1987.

\bibitem[Knaup et~al.(2023)Knaup, Okamoto, and Tsiotras]{Knaup_CSMPC}
J.~Knaup, K.~Okamoto, and P.~Tsiotras.
\newblock Safe high-performance autonomous off-road driving using covariance steering stochastic model predictive control.
\newblock \emph{IEEE Transactions on Control Systems Technology}, 31\penalty0 (5):\penalty0 2066--2081, 2023.

\bibitem[Lipp and Boyd(2016)]{CCP_Boyd}
T.~Lipp and S.~Boyd.
\newblock Variations and extension of the convex–concave procedure.
\newblock \emph{Optimization and Engineering}, 17\penalty0 (2):\penalty0 263--287, 2016.

\bibitem[Liu et~al.(2023)Liu, Rapakoulias, and Tsiotras]{exact_CS}
F.~Liu, G.~Rapakoulias, and P.~Tsiotras.
\newblock Optimal covariance steering for discrete-time linear stochastic systems, 2023.
\newblock ar{X}iv: 2211.00618.

\bibitem[Löfberg(2004)]{YALMIP}
J.~Löfberg.
\newblock {YALMIP}: A toolbox for modeling and optimization in {MATLAB}.
\newblock In \emph{IEEE International Symposium on Computer Aided Control Systems Design}, pages 284--289, Taipei, Taiwan, 2004.

\bibitem[Markovsky and Dörfler(2021)]{behavioral_system_theory}
I.~Markovsky and F.~Dörfler.
\newblock Behavioral systems theory in data-driven analysis, signal processing, and control.
\newblock \emph{Annual Reviews in Control}, 52:\penalty0 42--64, 2021.

\bibitem[Markovsky et~al.(2023)Markovsky, Huang, and Dörfler]{Dorfler_DD_special_issue}
I.~Markovsky, L.~Huang, and F.~Dörfler.
\newblock Data-driven control based on the behavioral approach: From theory to applications in power systems.
\newblock \emph{IEEE Control Systems Magazine}, 43\penalty0 (5):\penalty0 28--68, 2023.

\bibitem[Newey and McFadden(1994)]{MLE_statistics}
W.~K. Newey and D.~McFadden.
\newblock Chapter 36 large sample estimation and hypothesis testing.
\newblock In \emph{Handbook of Econometrics}, volume~4, pages 2111--2245. Elsevier, 1994.

\bibitem[Okamoto and Tsiotras(2019)]{kazu_PP}
K.~Okamoto and P.~Tsiotras.
\newblock Optimal stochastic vehicle path planning using covariance steering.
\newblock \emph{IEEE Robotics and Automation Letters}, 4\penalty0 (3):\penalty0 2276--2281, 2019.

\bibitem[Pilipovsky and Tsiotras(2021)]{Josh_IRA_journal}
J.~Pilipovsky and P.~Tsiotras.
\newblock Covariance steering with optimal risk allocation.
\newblock \emph{IEEE Transactions on Aerospace and Electronic Systems}, 57\penalty0 (6):\penalty0 3719--3733, 2021.

\bibitem[Pilipovsky and Tsiotras(2023{\natexlab{a}})]{JP_DDCS_noiseless}
J.~Pilipovsky and P.~Tsiotras.
\newblock Data-driven covariance steering control design.
\newblock In \emph{62nd IEEE Conference on Decision and Control}, Marina Bay, Singapore, Dec.~13--15, 2023{\natexlab{a}}.

\bibitem[Pilipovsky and Tsiotras(2023{\natexlab{b}})]{JP_RDDCS_arxiv}
J.~Pilipovsky and P.~Tsiotras.
\newblock Data-driven robust covariance control for uncertain linear systems, 2023{\natexlab{b}}.
\newblock ar{X}iv.

\bibitem[Rapakoulias and Tsiotras(2023)]{exact_CS_2}
G.~Rapakoulias and P.~Tsiotras.
\newblock Discrete-time optimal covariance steering via semidefinite programming.
\newblock In \emph{62nd IEEE Conference on Decision and Control}, Marina Bay, Singapore, Dec.~13--15, 2023.

\bibitem[Renganathan et~al.(2023)Renganathan, Pilipovsky, and Tsiotras]{Josh_DRCS}
V.~Renganathan, J.~Pilipovsky, and P.~Tsiotras.
\newblock Distributionally robust covariance steering with optimal risk allocation.
\newblock In \emph{American Control Conference (ACC)}, pages 2607--2614, San Diego, CA, May 31 -- June 2, 2023.

\bibitem[Ridderhof et~al.(2020)Ridderhof, {Pilipovsky}, and {Tsiotras}]{JoshJack}
J.~Ridderhof, J.~{Pilipovsky}, and P.~{Tsiotras}.
\newblock Chance-constrained covariance control for low-thrust minimum-fuel trajectory optimization.
\newblock In \emph{AAS/AIAA Astrodynamics Specialist Conference}, Lake Tahoe, CA, Aug 9--13, 2020.

\bibitem[Rotulo et~al.(2020)Rotulo, De~Persis, and Tesi]{DataDriven_dePersis_2}
M.~Rotulo, C.~De~Persis, and P.~Tesi.
\newblock Data-driven linear quadratic regulation via semidefinite programming.
\newblock \emph{IFAC-PapersOnLine}, 53\penalty0 (2):\penalty0 3995--4000, 2020.
\newblock 21st IFAC World Congress.

\bibitem[Saravanos et~al.(2022)Saravanos, Balci, Bakolas, and Teodorou]{saravanos2022distributed}
A.~D. Saravanos, I.~M. Balci, E.~Bakolas, and E.~A. Teodorou.
\newblock Distributed model predictive covariance steering, 2022.
\newblock ar{X}iv:2212.00398.

\bibitem[Sivaramakrishnan et~al.(2022)Sivaramakrishnan, Pilipovsky, Oishi, and Tsiotras]{nonGaussianCS}
V.~Sivaramakrishnan, J.~Pilipovsky, M.~M.~K. Oishi, and P.~Tsiotras.
\newblock Distribution steering for discrete-time linear systems with general disturbances using characteristic functions.
\newblock In \emph{American Control Conference (ACC)}, pages 4183--4190, Atlanta, Georgia, June 8--10, 2022.

\bibitem[Stoica and Ng(1998)]{Cramer_Rao_Bound}
P.~Stoica and B.~C. Ng.
\newblock On the {Cramer-Rao} bound under parametric constraints.
\newblock \emph{IEEE Signal Processing Letters}, 5\penalty0 (7):\penalty0 177--179, 1998.

\bibitem[Willems et~al.(2004)Willems, Rapisarda, and Markovsky]{WFL}
J.~C. Willems, P.~Rapisarda, and B.~Markovsky, I. andDe~Moor.
\newblock A note on persistency of excitation.
\newblock In \emph{43rd IEEE Conference on Decision and Control (CDC)}, volume~3, pages 2630--2631, Atlantis, Paradise, Bahamas, Dec 14-17, 2004.

\bibitem[Yuille and Rangarajan(2001)]{CCP_DC_programming}
A.~L. Yuille and A.~Rangarajan.
\newblock The concave-convex procedure ({CCCP}).
\newblock In \emph{Advances in Neural Information Processing Systems}, volume~14. MIT Press, 2001.

\bibitem[Zhang and Ding(2013)]{kronecker_product}
H.~Zhang and F.~Ding.
\newblock On the kronecker products and their applications.
\newblock \emph{Journal of Applied Mathematics}, 2013:\penalty0 Article ID 296185, 2013.

\end{thebibliography}

\newcommand{\spec}{\operatorname{spec}}

\title{Technical Appendix}

\section*{A.~Proof of Theorem~1}

\setcounter{equation}{0}
\renewcommand{\theequation}{A.\arabic{equation}}

\begin{proof}
Using \eqref{eq:WFL_gains} we can re-write the covariance dynamics \eqref{eq:covDynamics} as
\begin{align}  \label{eq:covDynamics_Gvars}
    \Sigma_{k+1} &= [B \ \ A]
    \begin{bmatrix}
        K_k \\
        I_n
    \end{bmatrix}
    \Sigma_k 
    \begin{bmatrix}
        K_k \\
        I_n
    \end{bmatrix}^\intercal
    [B \ \ A]^\intercal + DD^\intercal \nonumber \\
    &= (X_{1,T} - \Xi_{0,T}) G_k \Sigma_k G_k^\intercal (X_{1,T} - \Xi_{0,T})^\intercal + \Sigma_{\vct\xi}.
\end{align}
Similarly, the covariance cost in \eqref{eq:covCost} can be re-written as 
\begin{equation}~\label{eq:covCost_Gvars}
    J_{\Sigma, k} = \mathrm{tr}(Q_k\Sigma_k) + \mathrm{tr}(R_k U_{0,T} G_k \Sigma_k G_k^\intercal U_{0,T}^\intercal).
\end{equation}
Define the new decision variables $S_k := G_k \Sigma_k\in\Re^{T\times n}$, yielding the covariance dynamics 
\begin{equation}~\label{eq:covDynamics_Svars}
    \Sigma_{k+1} = (X_{1,T} - \Xi_{0,T}) S_k \Sigma_k^{-1} S_k (X_{1,T} - \Xi_{0,T})^\intercal + \Sigma_{\vct\xi},
\end{equation}
and the covariance cost
\begin{equation}~\label{eq:covCost_Svars}
    J_{\Sigma, k} = \mathrm{tr}(Q_k \Sigma_k) + \mathrm{tr}(R_k U_{0,T} S_k \Sigma_k^{-1} S_k^\intercal U_{0,T}^\intercal).
\end{equation}
This problem is still non-convex due to the nonlinear term $S_k\Sigma_k^{-1}S_k^\intercal$.
To this end, relax the covariance dynamics to an inequality constraint and relax the cost by defining a new decision variable $Y_k \succeq S_k \Sigma_k^{-1}S_k^\intercal$.
The relaxed problem is now convex, since the inequality constraints can be written using the Schur complement as the linear matrix inequalities (LMI)
\begin{subequations}
    \begin{align}
        \begin{bmatrix}
            \Sigma_k & S_k^\intercal \\
            S_k & Y_k
        \end{bmatrix} \succeq 0, ~~~~~~~~~~~& \\
        \begin{bmatrix}
            \Sigma_{k+1} - \Sigma_{\vct\xi} & (X_{1,T} - \Xi_{0,T})S_k \\
            S_k^\intercal (X_{1,T} - \Xi_{0,T})^\intercal & \Sigma_{k}
        \end{bmatrix} &\succeq 0.
    \end{align}
\end{subequations}
The cost and the equality constraints, on the other hand, are linear in all the decision variables, and hence convex.
\end{proof}
\section*{B.~Proof of Theorem~3}

\setcounter{equation}{0}
\renewcommand{\theequation}{B.\arabic{equation}}
\begin{proof}
Substituting the multivariable normal statistics PDF into the maximum likelihood cost in \eqref{eq:ML_problem} yields the minimization
\begin{subequations}
    \begin{align*}
        \min_{\Xi_{0,T}, \Sigma_{\vct\xi}} \ \mathcal{J}_{\mathrm{MLE}} &= \min_{\Xi_{0,T}, \Sigma_{\vct \xi}} \ \sum_{k=0}^{T-1} \bigg(-\frac{n}{2}\log(2\pi) - \frac{1}{2}\log\det\Sigma_{\vct\xi} - \frac{1}{2}\xi_k^\intercal \Sigma_{\vct\xi}^{-1} \xi_k\bigg) \nonumber \\
        &= \min_{\Xi_{0,T}, \Sigma_{\vct\xi}} \ -\frac{T}{2}\ \log\det\Sigma_{\vct\xi} - \frac{1}{2}\mathrm{tr} ( \Xi_{0,T}^\intercal \Sigma_{\vct\xi}^{-1}\Xi_{0,T} ) .
    \end{align*}
\end{subequations}
Thus, the MLE problem \eqref{eq:ML_problem} becomes
\begin{subequations} \label{eq:ML_problem_nonconvex}
    \begin{align}
        &\min_{\Xi_{0,T}, \Sigma_{\vct\xi}} \ \left(\frac{T}{2}\log\det\Sigma_{\vct\xi} + \frac{1}{2}\mathrm{tr} ( \Xi_{0,T}^\intercal \Sigma_{\vct\xi}^{-1}  \Xi_{0,T} ) \right) \\
        &\qquad (X_{1,T} - \Xi_{0,T})(I_{T} - S^\dagger S) = 0.
    \end{align}
\end{subequations}
Note that \eqref{eq:ML_problem_nonconvex} is \textit{non}-convex, since $\log\det\Sigma_{\vct\xi}$ is a concave function, and the second term in the objective is nonlinear in the decision variables.
To this end, we can relax this problem by defining the new decision variable $U \succeq \Xi_{0,T}^\intercal\Sigma_{\vct\xi}^{-1}\Xi_{0,T}$ such that the resulting program becomes
\begin{subequations} \label{eq:appA:opt1}
    \begin{align}
        &\min_{\Xi_{0,T}, \Sigma_{\vct\xi}, U} \ \left( \frac{1}{2}\mathrm{tr}(U) + \frac{T}{2}\log\det\Sigma_{\vct\xi} \right)  \\
        &\quad \quad G := (X_{1,T} - \Xi_{0,T})(I_{T} - S^\dagger S) = 0, \\
        &\quad\quad C := \Xi_{0,T}^\intercal \Sigma_{\vct\xi}^{-1} \Xi_{0,T} - U \preceq 0,
    \end{align}
\end{subequations}
where the last matrix inequality may be written as the LMI
\begin{equation}
    \begin{bmatrix}
        \Sigma_{\vct\xi} & \Xi_{0,T} \\
        \Xi_{0,T}^\intercal & U
    \end{bmatrix} \succeq 0.
\end{equation}
The above relaxation is \textit{lossless}, meaning the optimal solution to the DC program aligns with the solution to the original problem.
To see this, construct the Lagrangian of the optimization problem \eqref{eq:appA:opt1} as
\begin{equation*}
    \mathcal{L} = \frac{1}{2}\mathrm{tr} ( U  ) + \frac{T}{2}\log\det\Sigma_{\vct\xi} + \mathrm{tr}(MC) + \mathrm{tr}(\Lambda^\intercal G),
\end{equation*}
where $M=M^\intercal\in\Re^{T\times T} \succeq 0$ and $\Lambda\in\Re^{n\times T}$ are the Lagrange multipliers corresponding to the inequality and equality constraints, respectively.
The first-order necessary conditions require
\begin{equation*}
    \frac{\partial\mathcal{L}}{\partial U} = \frac{1}{2}I_{T} - M = 0 \Rightarrow M = \frac{1}{2}I_{T} \succ 0.
\end{equation*}
Since the Lagrange multipliers corresponding to the inequality constraint are all strictly positive, this implies that the constraints must \textit{all} be active in order to satisfy complementary slackness, that is, if $M \succ 0, C \preceq 0$, then $\mathrm{tr}(MC) = 0 \Rightarrow C \equiv 0$.
\end{proof}

\section*{C.~Proof of Corollary~1}

\setcounter{equation}{0}
\renewcommand{\theequation}{C.\arabic{equation}}

\begin{proof}
Starting from \eqref{eq:ML_problem_nonconvex}, introduce the Lagrangian
\begin{equation*}
    \mathcal{L} = \frac{1}{2}\mathrm{tr} ( \Xi_{0,T}^\intercal \Sigma_{\vct\xi}^{-1}\Xi_{0,T} )  + \mathrm{tr} \left( \Lambda^\intercal (X_{1,T} - \Xi_{0,T})(I_{T} - S^\dagger S) \right) .
\end{equation*}
The first-order necessary conditions yield
\begin{align} \label{eq:minimizer}
    \frac{\partial\mathcal{L}}{\partial\Xi_{0,T}} = &\Sigma_{\vct\xi}^{-1} \Xi_{0,T} - \Lambda (I_{T} - S^\dagger S) = 0
  ~~  \iff  ~~\Xi_{0,T} = \Sigma_{\vct\xi}\Lambda (I_{T} - S^\dagger S), 
\end{align}
where we used the fact that $I_{T} - S^\dagger S$ is symmetric.
Combining \eqref{eq:minimizer} with the constraint $(X_{1,T} - \Xi_{0,T})(I_{T} - S^\dagger S) = 0$ yields that
    $ \big( X_{1,T} - \Sigma_{\vct\xi} \Lambda (I_{T} - S^\dagger S) \big) (I_{T} - S^\dagger S) = 0$ or that 
    $X_{1,T} (I_{T} - S^\dagger S) - \Sigma_{\vct\xi} \Lambda (I_{T} - S^\dagger S) = 0 $ since $I_{T} - S^\dagger S$ is idempotent.
    The last equality is equivalent to
   $\Sigma_{\vct\xi} \Lambda (I_{T} - S^\dagger S) = X_{1,T} (I_{T} - S^\dagger S)$ and, finally, 
    $\Xi_{0,T} = X_{1,T} (I_T - S^\dagger S)$ as claimed.
\end{proof}

\section*{D.~Proof of Lemma~1}

\setcounter{equation}{0}
\renewcommand{\theequation}{D.\arabic{equation}}
\begin{proof}
For the unconstrained MLE problem of estimating the normally distributed parameters $\xi = [\xi_0^\intercal, \ldots, \xi_{T-1}^\intercal]^\intercal$, the FIM is given by $\mathcal{I} = I_{T}\otimes\Sigma_{\vct\xi}$.
The consistency constraints in vectorized form, $(\Gamma \otimes I_{n})\xi - \lambda = 0$, have the gradient $J = \Gamma \otimes I_{n}$.
Hence, the asymptotic covariance of the uncertainty realization estimates is
\begin{align*}
    \Sigma_{\Delta} &= (I_{T} \otimes \Sigma_{\vct\xi}) - (I_{T} \otimes \Sigma_{\vct\xi})(\Gamma\otimes I_{n})[(\Gamma\otimes I_{n})(I_{T}\otimes \Sigma_{\vct\xi})(\Gamma\otimes I_{n})]^{-1}(\Gamma\otimes I_{n})(\Sigma_{\vct\xi}\otimes I_{T}) \\
    &= I_{T} \otimes \Sigma_{\vct\xi} - \Gamma \otimes \Sigma_{\vct\xi} \\
    &= S^\dagger S \otimes \Sigma_{\vct\xi},
\end{align*}
where in the second equality, we use the facts that $\Gamma$ is symmetric and idempotent.
In the last equality, we use the definition  $\Gamma = I_{T} - S^\dagger S$.
\end{proof}

\section*{E.~Proof of Proposition~1}

\setcounter{equation}{0}
\renewcommand{\theequation}{E.\arabic{equation}}
\begin{proof}
It is known that the uncertainty set
\begin{equation}
    \Delta_{\xi} := \{\Delta\xi : \Delta\xi^\intercal\Sigma_{\Delta}^{-1}\Delta\xi \leq \rchi_{nT, 1-\delta}^{2}\}, \label{eq:uncertainty_set_xi}
\end{equation}
contains the $(1-\delta)$-quantile of the distribution of $\Delta\xi$~\citep{billingsley_probability_2012}.
To turn \eqref{eq:uncertainty_set_xi} into an uncertainty set for $\Delta\Xi_{0,T} = \vectorize^{-1}(\Delta\xi)$, recall that a quadratic form is bounded by
\begin{equation}
    \lambda_{\min}(\Sigma_{\Delta}^{-1})\|\Delta\xi\|^{2} \leq \Delta\xi^\intercal\Sigma_{\Delta}^{-1}\Delta\xi \leq \lambda_{\max}(\Sigma_{\Delta}^{-1})\|\Delta\xi\|^{2}.
\end{equation}
Thus,$\Delta\xi^\intercal\Sigma_{\Delta}^{-1}\Delta\xi \leq \rchi^{2}_{nT,1-\delta}$ implies that $\lambda_{\min}(\Sigma_{\Delta}^{-1})\|\Delta\xi\|^{2} \leq \rchi^{2}_{nT,1-\delta}$.
Next, note from the definition of the Frobenius norm, that
\begin{equation*}
    \|\Delta\Xi_{0,T}\|_{\rm F}^{2} = \mathrm{tr}(\Delta\Xi_{0,T}^\intercal \Delta\Xi_{0,T}) = \|\Delta\xi\|^{2}.
\end{equation*}
Thus, the uncertainty set \eqref{eq:uncertainty_set_xi} can be overapproximated by the set
\begin{equation}
    \Delta_{\rm F} = \left\{\Delta\Xi_{0,T} : \|\Delta\Xi_{0,T}\|_{\rm F} \leq   \rchi_{nT,1-\delta}/
    \sqrt{ \lambda_{\min}(\Sigma_{\Delta}^{-1})   }\right\}.
\end{equation}
Letting $\rho =   \rchi_{nT,1-\delta} / \sqrt{ \lambda_{\min} (\Sigma_{\Delta}^{-1}) }$, and noting that $\|\Delta\Xi_{0,T}\| \leq \|\Delta\Xi_{0,T}\|_{\rm F}$, we achieve the desired result.
%
\end{proof}

\section*{F.~Proof of Corollary~2}

\setcounter{equation}{0}
\renewcommand{\theequation}{F.\arabic{equation}}

\begin{proof}
From Lemma~\ref{lemma:MLE_convergence}, the covariance of the estimation error from the MLE scheme is given as $\Sigma_{\Delta} = S^\dagger S \otimes \Sigma_{\vct\xi}$.
Let
$\Sigma_{\Delta}^{\epsilon} := (S^\dagger S + \epsilon I_{T}) \otimes \Sigma_{\vct\xi}$, 
for some $\epsilon > 0$.
From the properties of the Kronocker product~\citep{kronecker_product}, it follows that
the eigenvalues of the matrix $(\Sigma_{\Delta}^{\epsilon})^{-1}$ are given by
\begin{equation*}
    \spec (\Sigma_{\Delta}^{\epsilon})^{-1} = \left\{\frac{1}{\lambda_i \mu_j}, \ \lambda_i \in\spec(\Sigma_{\vct\xi}), \ \mu_j\in\spec(S^\dagger S + \epsilon I_{T})\right\}.
\end{equation*}
It follows that
\begin{equation*}
    \lambda_{\min} (\Sigma_{\Delta}^{\epsilon})^{-1} =  \frac{1}{\lambda_{\max}(\Sigma_{\vct\xi}) \mu_{\max} (S^\dagger S + \epsilon I_{T})  } = \|\Sigma_{\vct\xi}^{1/2}\|^{-2}(1 + \epsilon)^{-1},
\end{equation*}
where we use the fact that $\lambda_{\max}(\Sigma_{\vct\xi}) = \sigma_{\max}(\Sigma_{\vct\xi}) = \|\Sigma_{\vct\xi}^{1/2}\|^2$ and that $\spec(S^\dagger S + \epsilon) = \{\epsilon, 1 + \epsilon\}$, since $S^\dagger S$ is a projection matrix.
Taking the limit as $\epsilon \rightarrow 0$, we get
\begin{equation*}
    \lim_{\epsilon\rightarrow 0} \lambda_{\min}(\Sigma_{\Delta}^{\epsilon})^{-1} = \lambda_{\min}(\Sigma_{\Delta}^{-1}) = \|\Sigma_{\vct\xi}^{1/2}\|^{-2},
\end{equation*}
and the result follows immediately.
\end{proof}

\end{document}